\newtheorem{definition}{\bf \emph{Definition}}
\newtheorem{theorem}{\bf \emph{Theorem}}
\newtheorem{lemma}{\bf \emph{Lemma}}
\newtheorem{proposition}{\bf \emph{Proposition}}
\theoremstyle{remark}
\newtheorem{remark}{\textit{Remark}}
 \def\cF{{\mathcal{F}}}  
\def\cI{{\mathcal{I}}}   \def\cL{{\mathcal{L}}}
 \def\cN{{\mathcal{N}}}  \def\cP{{\mathcal{P}}}
 \def\cR{{\mathcal{R}}} \def\cS{{\mathcal{S}}} 
   \def\cX{{\mathcal{X}}}
\def\das{\stackrel{a.s.}{=}}
\def\ddis{\stackrel{d}{=}}
\def\b({ \bigg( }
\def\b){ \bigg) }
\def\b[{\bigg[}
\def\b]{\bigg]}
\def\limn{\lim_{n \rightarrow \infty}}
\def\limN{\lim_{N \rightarrow \infty}}
\def\limsupn{\limsup_{n \rightarrow \infty}}
\def\limsupN{\limsup_{N \rightarrow \infty}}
\def\ba{{\mathbf{a}}} \def\bb{{\mathbf{b}}}    
  \def\bm{{\mathbf{m}}}  \def\bo{{\mathbf{o}}}
 \def\bq{{\mathbf{q}}}   
\def\bu{{\mathbf{u}}} \def\bv{{\mathbf{v}}} \def\bw{{\mathbf{w}}} \def\bx{{\mathbf{x}}} \def\by{{\mathbf{y}}}
 \def\bh{\mathbf{h}}
\def\bA{{\mathbf{A}}} \def\bB{{\mathbf{B}}}   
  \def\bH{{\mathbf{H}}} \def\bI{{\mathbf{I}}} 
  \def\bM{{\mathbf{M}}}  
\def\bP{{\mathbf{P}}} \def\bQ{{\mathbf{Q}}}   
 \def\bV{{\mathbf{V}}}  \def\bX{{\mathbf{X}}} \def\bY{{\mathbf{Y}}}
  \def\R{{\mathbb{R}}}        \def\E{\mathbb{E}}
\def\sumn{\sum_{i=1}^{n}}
\def\sumN{\sum_{i=1}^{N}}
\def\Cov{\mathop{\mathrm{Cov}}}
     \def\d4{\!\!\!\!}
\def\lL{\langle}  \def\lR{\rangle}
  \def\-{\! - \!}  \def\+{\! + \!}  \def\={\! = \!}  \def\>{\! > \!}
\newcommand{\bef}{\begin{figure}}
\newcommand{\eef}{\end{figure}}
\newcommand{\beq}{\begin{eqnarray}}
\newcommand{\eeq}{\end{eqnarray}}
\begin{document}
\title{On the Stability of Approximate Message Passing with Independent Measurement Ensembles 
}
\author{\IEEEauthorblockN{Dang Qua Nguyen and Taejoon Kim}
	
	\IEEEauthorblockA{\textit{Department of Electrical Engineering and Computer Science} \\
		\textit{The University of Kansas, Lawrence, KS 66045 USA}\\
		Email: \{quand, taejoonkim\}@ku.edu \vspace{-0.5cm} }
  \thanks{This work was supported in part by the National Science Foundation (NSF) under Grant CNS1955561, CNS2212565, CNS2225577, and the Office of Naval Research (ONR) under Grant N00014-21-1-2472.}
}
\maketitle	
\begin{abstract}
Approximate message passing (AMP) is a scalable, iterative approach to signal recovery. 
For structured random measurement ensembles, including independent and identically distributed (i.i.d.) Gaussian and rotationally-invariant matrices, the performance of AMP can be characterized by a scalar recursion called state evolution (SE). 
The pseudo-Lipschitz (polynomial) smoothness is conventionally assumed. 
In this work, we extend the SE for AMP to a new class of measurement matrices with independent (not necessarily identically distributed) entries.
We also extend it to a general class of functions, called controlled functions which are not constrained by the polynomial smoothness; unlike the pseudo-Lipschitz function that has polynomial smoothness, the controlled function grows exponentially. 
The lack of structure in the assumed measurement ensembles is addressed by leveraging Lindeberg-Feller. 
The lack of smoothness of the assumed controlled function is addressed by a proposed conditioning technique leveraging the empirical statistics of the AMP instances. 
The resultants grant the use of the SE to a broader class of measurement ensembles and a new class of functions.
\end{abstract}
\begin{IEEEkeywords}
Approximate message passing (AMP), state evolution (SE), controlled function, and  random matrix theory.  
\end{IEEEkeywords}
\section{Introduction}
\label{secI}	
    The problem of signal recovery from a linear observation\footnote{ A bold lower case letter $\ba$ is a column vector and a bold upper case letter $\bA$ is a matrix. 
    $\|\ba\|_p$, $\bA^*$, {$\bA^{-1}$}, and $A_{ij}$ denote the $p$-norm of $\ba$, transpose of $\bA$, {inverse of $\bA$}, and $i$th row and $j$th column entry of $\bA$, respectively. 
    $\bA(N)$ and $\ba(N)$, respectively, are the matrix and vector indexed by $N$.   
    $\cN(\nu,\sigma^2)$ denotes the Gaussian distribution with mean $\nu$ and variance $\sigma^2$.
    The $\xRightarrow{d}$ and $\das$ denote the convergence in distribution and equivalence in an almost sure sense, respectively.
    $\lL\bu,\bv \lR = \frac{1}{n}\sumn u_iv_i$ defines the normalized inner product of  $\bu, \bv \in \R^{n \times 1}$.
    $\E_{Z}[\cdot]$ denotes the expectation with respect to the random variable $Z$.
    $\mathbf{0}_n$ denotes the $n \times 1$ all-zero vector.}   
        \vspace{-0.2cm}
        \begin{equation}
         \label{eq:5}
            \by = \bA\bx_0 +\bw
        \vspace{-0.1cm}
        \end{equation} 
        appears in various fields \cite{Kay97,Donoho2006,Schniter2022}, where $\by  \in \R^{n \times 1}$, $\bA \in \R^{n \times N}$ is a given measurement matrix, $\bx_0 \in \R^{N \times 1}$ is the signal to be recovered, and $\bw \in \R^{n \times 1}$ is an additive noise. 
        Of particular interest is the case when $\bA$ is overcomplete ($n \ll N$).    
        However, the computational cost to solve this problem is typically prohibitive when the dimensions $n$ and $N$ are large. 
        Message passing (MP) can be applied to handle the large-dimensionality of the problem. 
        Conventionally, sparsity is essential for MP to approach a fundamental performance limit \cite{Richarson2001}.
        Recently, the approximate message-passing (AMP) algorithm has received significant attention \cite{Donoho2009,Donoho2010,Donoho20102} because it performs surprisingly well in systems that are not sparse.
    The remarkable features of AMP have inspired a wide range of applications \cite{Montanari2017, Fletcher2018, Lesieur2017, Pandit2019, Pandit2020, Emami2020, Zhang8, Jeon2, Schniter2010, Kim2015, Duan2015, Bellili2018,baron2020, Sung3, Tan2015, Millard2020}. 
    Despite its widespread applicability, the AMP algorithm suffers from instability issues \cite{Caltagirone2014,Ma2017,Sundeep19}.
    The instability is closely related to the underlying structure of the random measurement matrix $\bA$. 
    Understanding the dynamics of AMP for various classes of random measurement  ensembles has been an outstanding open problem.	
    
    A rigorous proof of state evolution (SE) was first established by Bayati and Montanari \cite{Bayati11}. 
    As $N$ tends to infinity while $\rho = \frac{n}{N}$ is kept constant, Bayati and Montanari \cite{Bayati11} asymptotically characterized the evolution of the mean squared error (state) of AMP for the  $\bA$  with independent and identically distributed (i.i.d.) Gaussian entries. 
    Rush \emph{et al.} \cite{Rush2018} showed a concentration bound of the SE in the finite $n$ and $N$ regime; the probability of deviation decays exponentially with $n$. 
    The validity of SE in \cite{Bayati11} has inspired extensive research efforts on extending it to different measurement ensembles such as sub-Gaussian $\bA$ by Bayati \emph{et al.} \cite{Mohsen15} and Chen \emph{et al.} \cite{chen2020}, right-orthogonally-invariant $\bA$ by Rangan \emph{et al.} \cite{Rangan2019}, rotationally-invariant $\bA$ by Fan \cite{Zhou2020}, unitarily-invariant $\bA$ by Takeuchi \cite{Takeuchi2020}, {and semi-random $\bA$ by Dudeja \emph{et al.} \cite{Dudeja2022}}.
    A belief is that the SE for AMP might hold for an even boarder class of matrices.  
   The key to analyzing the SE for AMP is the conditioning technique \cite{Bolthausen2009, Bayati11}.
   This means that the current instance of the AMP algorithm is modeled as a linear combination of the previous instances that are Gaussian, plus a deviation term (non-Gaussian). 
   A key step to the stability is leveraging the polynomial smoothness of the pseudo-Lipschitz function\footnote{The polynomial smoothness of the pseudo-Lipschitz function is defined in Appendix~\ref{PLdef}} and establishing that the contribution from the deviation term decays as $n$ and $N$ grow.   
    Recently, the controlled function has been introduced to analyze the nonlinear behavior of neural networks in machine learning \cite{yangTP1,yang2019}.
    Unlike the pseudo-Lipschitz function, the controlled function incorporates exponential growth into its model. 
    Hence, the pseudo-Lipschitz function can be viewed as a special case of the controlled function.
    This work is motivated by {the experimental observations \cite{Donoho2009,Ma2017}} that there is ample room for extending the establishment of SE for AMP to different classes of measurement ensembles and functions.  
    The major contributions of this work are summarized below.
    \vspace{-0.0cm}
    \begin{figure}[htp]\centering \includegraphics[width=0.41\textwidth]{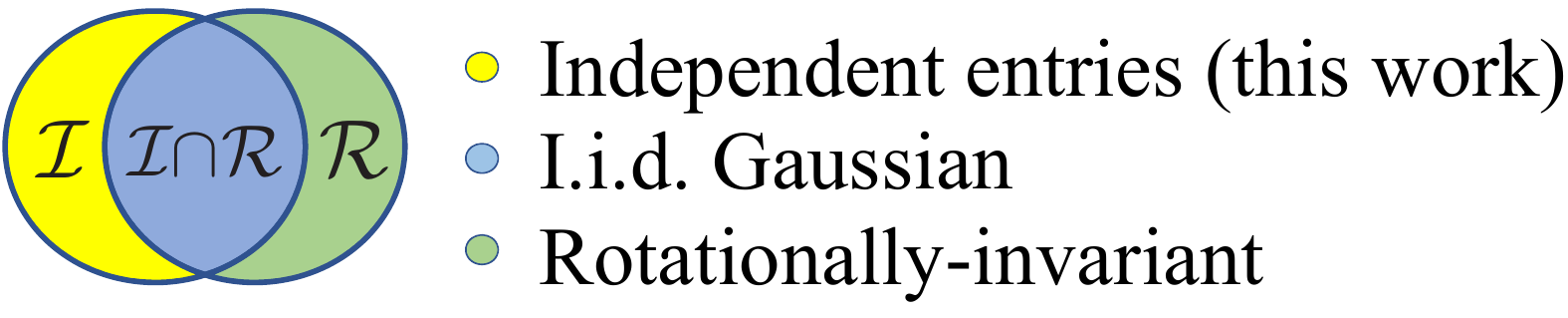} \vspace* {-0.0cm}
     \caption{Two categories of random measurement ensembles}
    \label{Fig1}
    \end{figure} 
    \vspace{-0.3cm}
    \begin{itemize}[leftmargin = 3mm]
    \item 
    We extend the SE analysis for AMP to the measurement ensembles in $\cI$ in Fig.~\ref{Fig1}, where $\cI$ is the set of matrices with independent but  not necessarily identically distributed entries. 
    Note that the SE analysis in the prior works \cite{Bayati11,Rush2018,Takeuchi2020,Rangan2019,Zhou2020} focused on the set $\cR$ in Fig.~\ref{Fig1}, where $\cR$ is the set of rotationally-invariant matrices, including the i.i.d. Gaussian ensembles. 
    The set $\cI$ distinguishes it from the set $\cR$ because it is far more incoherent than $\cR$.
    This constitutes a major challenge in establishing the SE 
    of the assumed ensembles.
    In our work, the lack of structure of the $\bA \in \cI$ is addressed by leveraging and extending the Lindeberg-Feller theorem.
    \item 
    In contrast to the prior works \cite{Bayati11,Rush2018, Mohsen15,chen2020,Takeuchi2020,Rangan2019,Zhou2020} 
    that relied on the pseudo-Lipschitz smoothness to establish SE, we generalize the SE analysis to the controlled function.
    The lack of smoothness of the latter is addressed by exploiting the measurability against Gaussian measure in conjunction with the conditioning technique, based on empirical statistics.  
    \end{itemize}
    A part of our results is a theoretical justification for the conjecture made in \cite{Bayati11} about the validity of the SE for AMP with i.i.d. non-Gaussian measurement ensembles. 
\section{Premilinaries} 
\label{secII}
First, the concept of empirical statistics is developed. Next, the frequently used statistical lemmas are presented.  
\vspace{-0.3cm}
\subsection{Empirical Statistics}\label{secIIA}
    The empirical law of a random vector constructed here is exploited to propose a conditioning technique in the following sections. We start by defining probability distributions.
    Suppose $\cP(\R)$ is the collection of all probability distributions on $\R$ with sample space $\Omega$. 
    A random variable \small$X:\Omega \rightarrow \R$\normalsize~has the distribution $\mu \in \cP(\R)$ denoted by \small$X \sim \mu$\normalsize, if \small$Pr(X \in \cS) = \mu(\cS)$\normalsize, for a set $\cS \subseteq \Omega$.
    For a function $f:\R \rightarrow \R$, we denote, if it exists, $\mu f = \int_{\R} f(x)\mu(dx)$.
    The first moment and second moment of the distribution $\mu$ are then given by taking \small$f(X) = X$\normalsize~and \small$f(X) = X^2$\normalsize, respectively, denoted as $\E[X] = \lL\mu \lR$ and $\E[X^2] = \lL\mu^2\lR$.
    The variance of the distribution $\mu$ is denoted by $\lL\mu\lR_2$, which equals to the variance of $X$. 
    Given these definitions, we define empirical distribution of deterministic vectors as follows. 

    \subsubsection{Empirical Distribution of a Deterministic Vector}
    For a deterministic vector $\bv \in \R^{n \times 1}$, the empirical sample mean, sample second moment, and sample variance are defined as \small$\lL\bv\lR \!=\!\frac{1}{n}\!\sum_{i=1}^{n}v_i$,\! $\lL\bv^2\lR \!=\! \frac{1}{n}\!\sum_{i=1}^{n}v^2_i$\normalsize, and \small$\lL\bv\lR_2 \!=\! \frac{1}{n}\sum_{i=1}^{n}(v_i-\lL\bv\lR)^2$\normalsize, respectively.
    We let $\delta_{\{a\}} \in \cP(\R)$ be the Dirac distribution with mass on $\{a\}$ {and \small$\delta_{\{a\}}f \triangleq f(a)$\normalsize~for all function $f:\R \rightarrow \R$}. 
    The empirical distribution of $\bv$ is then defined by $\widehat{\bv} \!\!=\!\! \frac{1}{n}\sum_{i=1}^{n}\delta_{\{v_i\}}$. 
    For the empirical distribution $\widehat{\bv}$, the first moment $\lL \widehat{\bv} \lR  \!\!=\!\!  \widehat{\bv}x  \!\!=\!\! \frac{1}{n} \sumn \delta_{\{v_i\}}x \!\!=\!\! \frac{1}{n}\sumn v_i =  \lL \bv \lR$, the second moment $\lL \widehat{\bv}^2 \lR  = \widehat{\bv}x^2 = \frac{1}{n}\sumn v^2_i= \lL \bv^2 \lR$, and the variance $\lL\widehat{\bv}\lR_2 = \frac{1}{n} \sumn (v_i-\lL\widehat{\bv}\lR)^2 = \lL\bv\lR_2$. 
    We are now ready to present the empirical law of a random vector. 
    \subsubsection{Empirical Law of a Random Vector}
    For a random vector $\bv:\Omega \rightarrow \R^{n \times 1}$, $\bv(\omega) \in \R^{n \times 1}$ is a random sample of the empirical distribution $\widehat{\bv(\omega)} \in \cP(\R)$. 
    The empirical law of the random vector $\bv$ is then defined by $\widehat{\bv} f = \E[\widehat{\bv(\omega)}f]$, for a measurable function $f$.
    Suppose that $v_i \sim \mu_i \in \cP(\R)$, $\forall i$. Then, we have the empirical law $\widehat{\bv} = \frac{1}{n}\sum_{i=1}^{n}\mu_i$.
    Moreover, the first moment $\lL \widehat{\bv}\lR = \frac{1}{n} \sum_{i=1}^{n}\lL \mu_i\lR$ and the second moment $\lL \widehat{\bv}^2\lR = \frac{1}{n} \sum_{i=1}^{n}\lL \mu_i^2\lR$. 
    \subsection{Frequently Used Statistics Results}\label{secIIB}    
    \begin{lemma}(Lindeberg-Feller \cite{Durrett10})\label{lm1}
    For $n \geq 1$, we let \small$\{X_{n,m}: 1\leq m\leq n\}$\normalsize~be an independent triangular array with $\E[X_{n,m}]\!=\!0$, $\forall m$. Suppose i) $\lim_{n\rightarrow \infty} \sum_{m=1}^{n}\E[X_{n,m}^2] = \sigma^2_x  > 0$; ii) (Lindeberg condition) for all $\epsilon > 0$, \small $\lim_{n \rightarrow \infty}\sum_{m=1}^{n}\mathbb{E}\left[X^2_{n,m};|X_{n,m}| > \epsilon\right] = 0$\normalsize. 
        \noindent Then $\sum_{m=1}^{n}X_{n,m} \xRightarrow{d} \cN(0,\sigma^2_x)$ as $n \rightarrow \infty$.
    \end{lemma}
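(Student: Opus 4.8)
The plan is to prove the convergence in distribution through characteristic functions and L\'evy's continuity theorem. Write $\phi_{n,m}(t) = \E[e^{\,itX_{n,m}}]$ and $\sigma_{n,m}^2 = \E[X_{n,m}^2]$, so that by the row-wise independence the characteristic function of $S_n = \sum_{m=1}^{n}X_{n,m}$ factors as $\prod_{m=1}^{n}\phi_{n,m}(t)$. Since the characteristic function of $\cN(0,\sigma_x^2)$ is $e^{-\sigma_x^2 t^2/2}$, it suffices to show $\prod_{m=1}^{n}\phi_{n,m}(t) \to e^{-\sigma_x^2 t^2/2}$ for every fixed $t \in \R$; L\'evy's continuity theorem then delivers $S_n \xRightarrow{d} \cN(0,\sigma_x^2)$.

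First I would establish the uniform asymptotic negligibility $\max_{1 \leq m \leq n}\sigma_{n,m}^2 \to 0$. Splitting $\sigma_{n,m}^2 = \E[X_{n,m}^2; |X_{n,m}| \leq \epsilon] + \E[X_{n,m}^2; |X_{n,m}| > \epsilon] \leq \epsilon^2 + \sum_{m=1}^{n}\E[X_{n,m}^2; |X_{n,m}| > \epsilon]$ and sending $n \to \infty$ under condition (ii), the right side tends to $\epsilon^2$, which is arbitrary. Next I would Taylor-expand each factor: using $\E[X_{n,m}] = 0$ and the bound $|e^{ix} - 1 - ix + x^2/2| \leq \min(|x|^3/6,\, x^2)$, I obtain $|\phi_{n,m}(t) - (1 - \tfrac{t^2}{2}\sigma_{n,m}^2)| \leq \E[\min(\tfrac{|t|^3}{6}|X_{n,m}|^3, t^2 X_{n,m}^2)]$. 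The crucial move is to split this expectation at the Lindeberg threshold $|X_{n,m}| = \epsilon$: on $\{|X_{n,m}| \leq \epsilon\}$ use the cubic bound, giving $\tfrac{|t|^3 \epsilon}{6}\sigma_{n,m}^2$, and on $\{|X_{n,m}| > \epsilon\}$ use the quadratic bound. Summing over $m$ yields $\sum_{m=1}^{n}|\phi_{n,m}(t) - (1 - \tfrac{t^2}{2}\sigma_{n,m}^2)| \leq \tfrac{|t|^3\epsilon}{6}\sum_{m}\sigma_{n,m}^2 + t^2\sum_{m}\E[X_{n,m}^2; |X_{n,m}| > \epsilon]$, whose limiting value is at most $\tfrac{|t|^3\epsilon}{6}\sigma_x^2$ by conditions (i) and (ii); letting $\epsilon \downarrow 0$ then drives it to $0$.

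The remaining steps are the standard telescoping of products. I would invoke the elementary lemma that for complex numbers with $|a_m|, |b_m| \leq 1$ one has $|\prod_m a_m - \prod_m b_m| \leq \sum_m |a_m - b_m|$; since $|\phi_{n,m}(t)| \leq 1$ and, by negligibility, $0 \leq \tfrac{t^2}{2}\sigma_{n,m}^2 \leq 1$ for all large $n$, this compares $\prod_m \phi_{n,m}(t)$ with $\prod_m(1 - \tfrac{t^2}{2}\sigma_{n,m}^2)$. A second application, together with $|e^{-x} - (1-x)| \leq x^2$ for $x \in [0,1]$, compares $\prod_m(1 - \tfrac{t^2}{2}\sigma_{n,m}^2)$ with $\prod_m e^{-t^2\sigma_{n,m}^2/2} = e^{-\frac{t^2}{2}\sum_m \sigma_{n,m}^2}$; the accumulated error is bounded by $\sum_m (\tfrac{t^2}{2}\sigma_{n,m}^2)^2 \leq \tfrac{t^4}{4}(\max_m \sigma_{n,m}^2)\sum_m \sigma_{n,m}^2 \to 0$. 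By condition (i) the last product converges to $e^{-\sigma_x^2 t^2/2}$, which completes the argument.

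The hard part will be the error estimate in the second paragraph: it is precisely there that the Lindeberg condition enters, and the two-regime truncation at $\epsilon$ must be arranged so that the ``bulk'' term carries a free factor $\epsilon$ (to be sent to zero last) while the ``tail'' term is exactly the quantity controlled by condition (ii). Establishing uniform asymptotic negligibility first is what legitimizes every product-comparison step, since it keeps all factors inside the unit disk.
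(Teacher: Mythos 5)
Your proof is correct, but note that the paper does not prove this lemma at all: it is imported verbatim from the cited reference \cite{Durrett10}, and the argument you give --- uniform asymptotic negligibility of the row variances, the third-order characteristic-function expansion split at the Lindeberg threshold $\epsilon$, and the two telescoping product comparisons --- is precisely the classical proof given there. All the estimates check out (in particular the key bound $\sum_m |\phi_{n,m}(t)-(1-\tfrac{t^2}{2}\sigma_{n,m}^2)| \le \tfrac{|t|^3\epsilon}{6}\sum_m\sigma_{n,m}^2 + t^2\sum_m\E[X_{n,m}^2;|X_{n,m}|>\epsilon]$, with $\epsilon\downarrow 0$ taken last), so there is nothing to add.
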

     \noindent We propose a sufficient condition for the \emph{Lindeberg condition}.
    \begin{proposition}\label{ps1} 
    If there exists $\alpha >0$ such that \small$\sup_{m}\! \E[X^{2+2\alpha}_{n,m}] = o(n^{-1})$\normalsize, the Lindeberg condition in Lemma~\ref{lm1} holds. 
    \end{proposition}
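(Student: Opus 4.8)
The plan is to verify the Lindeberg condition directly via a truncation (Markov/Chebyshev-type) estimate, using the uniformly small $(2+2\alpha)$-th moment to dominate the tail contribution of each term. Throughout I read $\E[X_{n,m}^{2+2\alpha}]$ as the absolute moment $\E[|X_{n,m}|^{2+2\alpha}]$, which is the natural interpretation when $2+2\alpha$ is not an even integer and is precisely what makes the pointwise bound below valid.

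First I would fix an arbitrary $\epsilon > 0$ and work on the tail event $\{|X_{n,m}| > \epsilon\}$. On this event $|X_{n,m}|^{2\alpha} > \epsilon^{2\alpha}$, i.e. $\bone_{\{|X_{n,m}| > \epsilon\}} < \epsilon^{-2\alpha}|X_{n,m}|^{2\alpha}$, so multiplying by $X_{n,m}^2$ yields the pointwise inequality
\[
X_{n,m}^2\,\bone_{\{|X_{n,m}| > \epsilon\}} \;\le\; \epsilon^{-2\alpha}\,|X_{n,m}|^{2+2\alpha}.
\]
Taking expectations and then passing to the supremum over $m$ gives
\[
\E\!\left[X^2_{n,m};|X_{n,m}| > \epsilon\right] \;\le\; \epsilon^{-2\alpha}\,\E\!\left[|X_{n,m}|^{2+2\alpha}\right] \;\le\; \epsilon^{-2\alpha}\,\sup_{m}\E\!\left[|X_{n,m}|^{2+2\alpha}\right].
\]

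Next I would sum this bound over $m = 1,\dots,n$, which introduces exactly the factor $n$ that cancels the decay rate in the hypothesis:
\[
\sum_{m=1}^{n}\E\!\left[X^2_{n,m};|X_{n,m}| > \epsilon\right] \;\le\; \frac{n}{\epsilon^{2\alpha}}\,\sup_{m}\E\!\left[|X_{n,m}|^{2+2\alpha}\right].
\]
Finally, invoking the assumption $\sup_{m}\E[|X_{n,m}|^{2+2\alpha}] = o(n^{-1})$, the right-hand side equals $\epsilon^{-2\alpha}\cdot n\cdot o(n^{-1}) = \epsilon^{-2\alpha}\,o(1)$, which tends to $0$ as $n \to \infty$. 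Since $\epsilon > 0$ was arbitrary, the Lindeberg condition of Lemma~\ref{lm1} follows. The argument is a single tail estimate with no substantial obstacle; the only delicate point is the absolute-value reading of the moment hypothesis, so I would state that interpretation explicitly at the outset.
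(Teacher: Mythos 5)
Your proof is correct and is essentially the paper's own argument: the same Chebyshev-type truncation bound $\E[X^2_{n,m};|X_{n,m}|>\epsilon] \le \epsilon^{-2\alpha}\E[|X_{n,m}|^{2+2\alpha}]$, summed over $m$ so that the factor $n$ cancels against the $o(n^{-1})$ hypothesis. The only difference is that you spell out the pointwise inequality and the absolute-moment reading explicitly, which the paper leaves implicit.
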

    \begin{proof}
    See Appendix~\ref{ProofPS1}.
    \end{proof} 
 \noindent Incorporating {Proposition}~\ref{ps1} into {Lemma}~\ref{ps1} leads to a variant of Lindeberg-Feller that we use for our analysis.  
 \begin{proposition}\label{ps2}
    For $n \geq 1$, we let \small$\{X_{n,m}\!:\!1 \leq m \leq n\}$\normalsize~be independent random triangular array with \small$\E[X_{n,m}] \!=\! 0$, $\forall m$\normalsize. 
    Suppose   
    i) $\lim_{n \rightarrow \infty}$ $\sum_{m=1}^{n} \E[X^2_{n,m}] = \sigma^2_x  < \infty$; 
    ii) $\sup_m \mathbb{E}\left[X^{2+2\alpha}_{n,m}\right] = o(n^{-1})$, for a constant  $\alpha >0$.
    Then $\sum_{m=1}^{n}X_{n,m} \xRightarrow{d} \mathcal{N}(0,\sigma^2_x)$ as $n \rightarrow \infty$.
\end{proposition}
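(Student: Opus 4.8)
The plan is to obtain Proposition~\ref{ps2} as a direct synthesis of the Lindeberg--Feller theorem (Lemma~\ref{lm1}) and the moment criterion of Proposition~\ref{ps1}: I would show that the two hypotheses of Proposition~\ref{ps2} imply the two hypotheses of Lemma~\ref{lm1}, and then invoke Lemma~\ref{lm1} to reach the conclusion. Since the triangular array is already independent with $\E[X_{n,m}] = 0$ for every $m$, no preliminary centering or truncation is required, and the argument reduces to matching hypotheses.

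First, I would handle condition i). Apart from the finiteness/positivity bookkeeping, hypothesis i) of Proposition~\ref{ps2} is the same assertion as condition i) of Lemma~\ref{lm1}: both say that $\sum_{m=1}^{n}\E[X_{n,m}^2]$ converges to the common limit $\sigma^2_x$. I would split into two cases according to the value of $\sigma^2_x$. When $\sigma^2_x > 0$, condition i) of Lemma~\ref{lm1} holds verbatim and nothing more is needed here. When $\sigma^2_x = 0$, the Gaussian target degenerates to a point mass at the origin, and the claim follows immediately from Chebyshev's inequality: because $\var\big(\sum_{m=1}^{n}X_{n,m}\big) = \sum_{m=1}^{n}\E[X_{n,m}^2] \to 0$, we get $\sum_{m=1}^{n}X_{n,m} \xRightarrow{d} 0 = \cN(0,0)$, so this endpoint requires no appeal to Lemma~\ref{lm1} at all.

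Next, for the remaining case $\sigma^2_x > 0$, I must verify the Lindeberg condition, i.e.\ condition ii) of Lemma~\ref{lm1}. This is precisely where Proposition~\ref{ps1} enters: hypothesis ii) of Proposition~\ref{ps2} posits a constant $\alpha > 0$ with $\sup_m \E[X_{n,m}^{2+2\alpha}] = o(n^{-1})$, which is exactly the premise of Proposition~\ref{ps1}. Applying Proposition~\ref{ps1} therefore delivers the Lindeberg condition directly. With both hypotheses of Lemma~\ref{lm1} now established, Lemma~\ref{lm1} yields $\sum_{m=1}^{n}X_{n,m}\xRightarrow{d}\cN(0,\sigma^2_x)$, which finishes the proof.

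The argument is essentially a composition of the two preceding results, so there is no deep obstacle; the one point that demands care is reconciling the hypothesis $\sigma^2_x < \infty$ assumed in Proposition~\ref{ps2} with the strict positivity $\sigma^2_x > 0$ required by Lemma~\ref{lm1}. Treating the degenerate endpoint $\sigma^2_x = 0$ separately through Chebyshev's inequality is exactly what bridges that gap and lets the statement hold under the weaker finiteness hypothesis as written.
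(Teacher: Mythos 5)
Your proof is correct and follows essentially the same route as the paper, which simply combines Proposition~\ref{ps1} (to verify the Lindeberg condition) with Lemma~\ref{lm1}. Your separate treatment of the degenerate endpoint $\sigma^2_x = 0$ via Chebyshev's inequality is a small but worthwhile addition, since the paper states the hypothesis as $\sigma^2_x < \infty$ while Lemma~\ref{lm1} requires $\sigma^2_x > 0$.
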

 \noindent {Proposition}~\ref{ps2} will be further exploited to propose our key propositions in Section~\ref{MCR}.
In what follows, $\bI$ denotes an identity matrix with appropriate dimensions. For a matrix $\bH \in \R^{n \times t}$ ($n \geq t$), \small$\bP_{\bH} = \bH(\bH^*\bH)^{-1}\bH^*$\normalsize ~is the orthogonal projection onto the column subspace of $\bH$ {and $\bP^{\perp}_{\bH} = \bI - \bP_{\bH}$}.
Two random variables $X$ and $Y$ are said to be equal in distribution, denoted by $X \ddis Y$, if \small$\E[\phi(X)Z] =\E[\phi(Y)Z]$\normalsize, 
for any integrable function {$\phi$} and random variable $Z$. 
The following proposition characterizes our  conditional distribution result. 
  \begin{proposition} \label{ps4} 
   Given $\bA \in \R^{n \times N}$, $\widetilde{\bA} \in \R^{n\times N} $ such that $\bA \stackrel{d}{=} \widetilde{\bA}$, and the set \small$\cP_{\bX,\bY} = \{\bA|\bA^*\bM = \bX, \bA\bQ =\bY\}$\normalsize, where $\bM \in \R^{n \times t}$, $\bX \in \R^{N \times t}$, $\bQ \in \R^{N \times t}$, and $\bY \in \R^{n \times t}$, the following holds 
 \vspace{-0.3cm}
 \small  
\begin{equation}
    \label{eq:Adistribution}
    \bA|_{\cP_{\bX,\bY}} \stackrel{d}{=} \bP^{\perp}_{\bM}\widetilde{\bA}\bP^{\perp}_{\bQ} + \bB,
\vspace{-0.15cm}
\end{equation} 
\normalsize
where $\bA|_{\cP_{\bX,\bY}}$ is the orthogonal projection of $\bA$ onto $\cP_{\bX,\bY}$ and \small$\bB= \bY(\bQ^*\bQ)^{-1}\bQ^*$ + $\bM(\bM^*\bM)^{-1}\bX^* -  \bM(\bM^*\bM)^{-1}\bX^*\bQ(\bQ^*\bQ)^{-1}\bQ^*$\normalsize.
\end{proposition}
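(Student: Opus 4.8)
The plan is to split $\bA$ into its four two-sided projection blocks, show that the affine constraints defining $\cP_{\bX,\bY}$ freeze three of them into the deterministic matrix $\bB$, and then argue that the surviving block $\bP^{\perp}_{\bM}\bA\bP^{\perp}_{\bQ}$ is statistically independent of the conditioning data, so that conditioning leaves its law unchanged. Writing $\bP_{\bM}=\bM(\bM^*\bM)^{-1}\bM^*$, $\bP_{\bQ}=\bQ(\bQ^*\bQ)^{-1}\bQ^*$ and using $\bP_{\bM}+\bP^{\perp}_{\bM}=\bI$, $\bP_{\bQ}+\bP^{\perp}_{\bQ}=\bI$, I would first expand
\[
\bA=\bP_{\bM}\bA\bP_{\bQ}+\bP_{\bM}\bA\bP^{\perp}_{\bQ}+\bP^{\perp}_{\bM}\bA\bP_{\bQ}+\bP^{\perp}_{\bM}\bA\bP^{\perp}_{\bQ}.
\]
On $\cP_{\bX,\bY}$ the constraint $\bA^*\bM=\bX$ gives $\bP_{\bM}\bA=\bM(\bM^*\bM)^{-1}\bX^*$, while $\bA\bQ=\bY$ gives $\bA\bP_{\bQ}=\bY(\bQ^*\bQ)^{-1}\bQ^*$, so the first three blocks are determined. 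Adding them as $\bP_{\bM}\bA+\bP^{\perp}_{\bM}\bA\bP_{\bQ}$ and substituting the identity $\bM^*\bY=\bM^*\bA\bQ=\bX^*\bQ$ collapses the cross term and reproduces exactly $\bB$; hence $\bA|_{\cP_{\bX,\bY}}=\bB+\bP^{\perp}_{\bM}\bA\bP^{\perp}_{\bQ}$, which is the deterministic part of \eqref{eq:Adistribution}.

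\emph{Independence of the free block.} It then remains to identify the conditional law of $\bP^{\perp}_{\bM}\bA\bP^{\perp}_{\bQ}$. Vectorizing via $\vec(\bP_{\bM}\bA\bP_{\bQ})=(\bP_{\bQ}\otimes\bP_{\bM})\vec(\bA)$, the four operators $\bP_{\bQ}\otimes\bP_{\bM}$, $\bP^{\perp}_{\bQ}\otimes\bP_{\bM}$, $\bP_{\bQ}\otimes\bP^{\perp}_{\bM}$, $\bP^{\perp}_{\bQ}\otimes\bP^{\perp}_{\bM}$ are symmetric idempotents with mutually orthogonal ranges that sum to $\bI_{nN}$. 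For the assumed ensemble the images of $\vec(\bA)$ under orthogonal projections are uncorrelated and, being jointly Gaussian (equivalently, by rotational invariance of the ensemble), independent; thus $\bP^{\perp}_{\bM}\bA\bP^{\perp}_{\bQ}$ is independent of the pair $(\bP_{\bM}\bA,\bA\bP_{\bQ})$ that generates the conditioning $\sigma$-algebra.

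\emph{Matching the distributions.} Finally I would invoke the paper's notion of $\ddis$: for any integrable $\phi$ and any $Z$ measurable with respect to the conditioning data, independence factorizes $\E[\phi(\bP^{\perp}_{\bM}\bA\bP^{\perp}_{\bQ})Z]=\E[\phi(\bP^{\perp}_{\bM}\bA\bP^{\perp}_{\bQ})]\,\E[Z]$, and since $\bA\ddis\tbA$ entails $\bP^{\perp}_{\bM}\bA\bP^{\perp}_{\bQ}\ddis\bP^{\perp}_{\bM}\tbA\bP^{\perp}_{\bQ}$ (a fixed linear image), this equals $\E[\phi(\bP^{\perp}_{\bM}\tbA\bP^{\perp}_{\bQ})]\,\E[Z]$, which yields \eqref{eq:Adistribution}. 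The main obstacle is the independence step: promoting the orthogonality of the four Kronecker-projection subspaces to genuine statistical independence of the free block from the conditioning data. This is exact for Gaussian (rotationally invariant) $\bA$, and it is precisely here that the distributional hypothesis on the ensemble must enter, with the weaker test-function definition of $\ddis$ being what lets the conclusion survive for the broader ensembles of interest. The bookkeeping that ``fixing $\bP_{\bM}\bA$ and $\bA\bP_{\bQ}$ is equivalent to fixing exactly the three determined corner subspaces and nothing inside the fourth'' also needs care, since $\bP_{\bM}\bA$ and $\bA\bP_{\bQ}$ share the block $\bP_{\bM}\bA\bP_{\bQ}$.
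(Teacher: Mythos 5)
Your algebraic bookkeeping is correct and matches the paper's: the paper opens with exactly the decomposition $\bA = \bP^{\perp}_{\bM}\bA\bP^{\perp}_{\bQ} + \bA\bP_{\bQ} + \bP_{\bM}\bA - \bP_{\bM}\bA\bP_{\bQ}$, substitutes the constraints to identify the last three terms with $\bB$, and declares $\bA|_{\cP_{\bX,\bY}} = \bP^{\perp}_{\bM}\bA\bP^{\perp}_{\bQ} + \bB$. Where you diverge is in what you think remains to be shown. The paper treats $\bA|_{\cP_{\bX,\bY}}$ literally as the \emph{orthogonal projection} of the random matrix $\bA$ onto the affine set $\cP_{\bX,\bY}$ --- a fixed linear-affine map applied to $\bA$ --- so no conditioning ever occurs. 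Its entire probabilistic content is then one line: for any integrable $\psi$, $\E[\psi(\bP^{\perp}_{\bM}\bA\bP^{\perp}_{\bQ}+\bB)] = \E[\psi(\bP^{\perp}_{\bM}\widetilde{\bA}\bP^{\perp}_{\bQ}+\bB)]$ because $\bA \ddis \widetilde{\bA}$ and $\psi(\bP^{\perp}_{\bM}(\cdot)\bP^{\perp}_{\bQ}+\bB)$ is just another integrable function. Under that reading, your first paragraph plus your final substitution step already constitute a complete proof, and your middle paragraph is unnecessary.

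The middle paragraph is also where a genuine problem lies if you insist on the conditional-law interpretation. You argue that the four Kronecker-projection images of $\vec(\bA)$ are uncorrelated and hence independent ``being jointly Gaussian (equivalently, by rotational invariance of the ensemble).'' But the whole point of this paper is that $\bA \in \cI$ has independent, not necessarily identically distributed, generally non-Gaussian entries; the remark immediately following the proposition states explicitly that it ``does not rely on the unitary-invariance property.'' For such ensembles, orthogonal projections of $\vec(\bA)$ are uncorrelated but \emph{not} independent, so the step promoting orthogonality of ranges to statistical independence of $\bP^{\perp}_{\bM}\bA\bP^{\perp}_{\bQ}$ from $(\bP_{\bM}\bA,\bA\bP_{\bQ})$ fails outside the Gaussian/rotationally-invariant case. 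You flag this yourself as ``the main obstacle,'' but you do not resolve it, and within the paper's framework it cannot be resolved --- which is precisely why the paper defines $\bA|_{\cP_{\bX,\bY}}$ as a projection rather than a conditional distribution and thereby makes the proposition an algebraic identity plus an appeal to $\bA \ddis \widetilde{\bA}$. In short: drop the independence paragraph and state that you are proving the projection identity, and your argument coincides with the paper's; keep it, and you are proving a strictly stronger statement that is false for the measurement ensembles this paper is about.
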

\begin{proof}
See Appendix~\ref{ProofPS3}.    
\end{proof}
\begin{remark} An equivalence to \eqref{eq:Adistribution} was proven in \cite[Lemma 10, Eqn. (3.29)]{Bayati11}. 
\vspace{-0cm}
The underlying assumption of \cite{Bayati11} was that $A_{ij}$ are i.i.d. following $\cN(0,\frac{1}{n})$, in which the conditional distribution of $\bA$ is computed based on its  unitary-invariance property (i.e., $\bA \in \cI \cap \cR$). 
On the other hand, {Proposition}~\ref{ps4} does not rely on the unitary-invariance property.
The result in Proposition~\ref{ps4} will be exploited in Section~\ref{secIII}--\ref{secIV} to characterize the conditional distribution of $\bA \in \cI$ in our SE analysis.  
\end{remark}

\section{Main Results}
\label{secIII}
In this section, we present our main results by showing the SE for AMP when $\bA \in \cI$ and for the controlled function. 

Given the linear observation in \eqref{eq:5}, the general AMP algorithm \cite{Bayati11} recurs the vectors $\bh^{t+1} \in \R^{N \times 1}$, $\bq^t \in \R^{N \times 1}$, $\bb^t \in \R^{n \times 1}$, and $\bm^t \in \R^{n \times 1}$, sequentially, for $t \geq 0$, through    
    \vspace{-0.1cm}
    \small\begin{subequations}
    \label{eq:6}
    \beq
    \bq^t = f_t(\bh^t, \bx_{0}),&& \bb^t =\bA\bq^t - \lambda_t\bm^{t-1} \label{eq:6a},\\
    \bm^t =g_t(\bb^t,\bw),&& \bh^{t+1} = \bA^*\bm^t - \xi_t\bq^t \label{eq:6b},
    \eeq 
    \end{subequations}\normalsize
    where \small$\bq^0 \in \R^{N \times 1}$\normalsize~is an initial condition,  \small$\bh^0 = \mathbf{0}_N$, $\bm^{-1} = \mathbf{0}_n$\normalsize, and $f_t(\cdot,\cdot)$ and $g_t(\cdot,\cdot)$ are {controlled functions}.
    The $f_t$ and $g_t$ are applied entry-wise when their arguments are vectors, e.g., \small$g_t(\bb^t,\bw) = (g_t(b^t_1,w_1),\dots,g_t(b^t_n,w_n)) \in \R^{n \times 1}$\normalsize. 
    Suppose that  $f_t$ and $g_t$ are differentiable with respect to their first argument. 
    Then, the scalar $\lambda_t$ and $\xi_t$ in \eqref{eq:6} are, respectively, \small$\lambda_t =\frac{1}{\rho} \lL f'_t(\bh^t,\bx_0)\lR$\normalsize~and \small$\xi_t = \lL g'_t(\bb^t,\bw)\lR$\normalsize, where \small$f'_t(\bh^t,\bx_0) = \left(\frac{\partial f_t}{\partial h^t_1},\dots,\frac{\partial f_t}{\partial h^t_N}\right) \in \R^{N \times 1}$, $g'_t(\bb^t,\bw) = \left(\frac{\partial g_t}{\partial b^t_1},\dots,\frac{\partial g_t}{\partial b^t_n}\right)\in \R^{n \times 1}$,  $\frac{\partial f}{\partial x}$\normalsize~denotes the partial derivative of $f$ with respect to $x$, and $\rho = \frac{n}{N}$ is kept being constant as $n$ and $N$ tend to infinity. 
 \subsection{Definitions and Assumptions}\label{Sec:Assumption}
 \subsubsection{Controlled Function} A function $\phi$: $\R^{t \times 1} \rightarrow \R$ is called a controlled function \cite{yangTP1,yang2019} if for $\bx \in \R^{t \times 1}$,
        \vspace{-0.0cm}
    \small\begin{equation}
        \label{eq:controlled1}
        |\phi(\bx)| \leq c_1 \exp(c_2\|\bx\|^{\lambda}_2), 
    \end{equation}\normalsize
    where $c_1 > 0,  c_2 > 0,$ and $ 1 \leq \lambda <2$ are constant. 
    The controlled function is a general model of a nonlinear function without  polynomial smoothness constraint. 
    Although the controlled function may increase exponentially, it is integrable in $\cL^1$ and $\cL^2$ spaces against Gaussian measure \cite{yangTP1,yang2019}.
    Herein, the space $\cL^p(\cX,\cF,\mu)$ consists of all measurable functions $\{f\}$ on $\cX$ such that $\int_{\cX} |f(x)|^p d\mu(x) < \infty,$ where $(\cX,\cF,\mu)$ is a measure space and $p \in \{1,2\}$. 
     Using the inequalities $\|\bx\|_2 \leq \|\bx\|_{\lambda} \leq t^{\frac{1}{\lambda}-\frac{1}{2}}\|\bx\|_2$ for $\bx \in \R^{t\times1}$ and $1 \leq \lambda <2$, we get from \eqref{eq:controlled1}
     \vspace{-0.1cm}
     \small\begin{equation}
         \label{eq:controlled2}
         |\phi(\bx)|\leq c_1\exp(c_2\|\bx\|^\lambda_{\lambda}) \leq c_1\exp(c_2t^{\frac{1}{\lambda}-\frac{1}{2}}\|\bx\|^{\lambda}_2),
     \end{equation}\normalsize 
    revealing that \small$c_1\exp(c_2\|\bx\|^\lambda_{\lambda})$\normalsize~is also a controlled function.
    \subsubsection{Measurement Matrix}
    \label{definitionA}
    The matrix $\bA$ in \eqref{eq:5} consists of independent entries $A_{ij} \sim \frac{1}{\sqrt{n}}\mu_{ij}$, with $\lL \mu_{ij}\lR = 0$ and $\lL \mu_{ij} \lR_2 = 1$, which are not necessarily identically distributed, $\forall i,j$, i.e., $\bA \in \cI$.  
      
\subsubsection{Signal $\bx_0$ and Noise $\bw$}

The entries of signal vector {$\bx_0$} and noise $\bw$ in \eqref{eq:5} are i.i.d. according to the distribution $\mu_{X_0}$ and $\mu_{W}$,  respectively, where $\lL \mu_{X_0} \lR = 0, \lL \mu_{X_0}\lR_2 = \sigma^2_{X_0}$ and $\lL \mu_W \lR = 0, \lL \mu_W\lR_2 = \sigma^2_W$.
The empirical distributions $\widehat{\bx}_0 \xRightarrow{\text{d}} \mu_{X_0}$ as $N \rightarrow \infty$, $\widehat{\bw} \xRightarrow{\text{d}} \mu_{W}$ as $n \rightarrow \infty$, and  $\mu_{X_0}f < \infty$ and $\mu_{W}f < \infty$ for any controlled function $f$.  

\subsection{Convergence Lemmas}
\label{MCR}
To find an asymptotic expression SE for the measurement ensemble $\cI$, we establish the following lemmas.
\begin{proposition}
\label{ps5}
Let $A(n)$ be a random variable with $A(n) \sim \frac{1}{\sqrt{n}}\mu$, where $\langle \mu \rangle = 0$ and $\langle\mu \rangle_2 = 1$.
Then \small$\mathbb{E}[A^{2+2\alpha}(n)] = o(n^{-2})$\normalsize~for $\alpha >1$.   
\end{proposition}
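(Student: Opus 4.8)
The plan is to reduce the claim to elementary power counting, exploiting the explicit scaling $A(n) = \tfrac{1}{\sqrt{n}} X$ with $X \sim \mu$, and then comparing the resulting exponent of $n$ against the target rate $n^{-2}$.

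First I would use the homogeneity of the monomial $x \mapsto x^{2+2\alpha}$ together with $A(n) = n^{-1/2}X$ to write
\[
    \mathbb{E}\!\left[A^{2+2\alpha}(n)\right] = n^{-(1+\alpha)}\,\mathbb{E}\!\left[X^{2+2\alpha}\right] = n^{-(1+\alpha)}\,\lL \mu^{2+2\alpha}\lR,
\]
so that the entire $n$-dependence is carried by the prefactor $n^{-(1+\alpha)}$, the factor $\lL \mu^{2+2\alpha}\lR$ being a constant independent of $n$. Then, since the hypothesis $\alpha > 1$ forces $1+\alpha > 2$, I would compare against the target by forming the ratio $n^{-(1+\alpha)}/n^{-2} = n^{1-\alpha}$ and noting that $n^{1-\alpha} \to 0$ as $n \to \infty$. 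This is exactly the assertion $\mathbb{E}[A^{2+2\alpha}(n)] = o(n^{-2})$, and it simultaneously shows the threshold $\alpha > 1$ to be sharp, since $\alpha = 1$ would yield only $O(n^{-2})$.

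The single delicate point, and the step I expect to be the main obstacle, is the finiteness of the constant $\lL \mu^{2+2\alpha}\lR = \mathbb{E}[X^{2+2\alpha}]$, since the hypotheses name only the first two moments $\lL \mu\lR = 0$ and $\lL \mu\lR_2 = 1$; were this higher moment infinite, the left-hand side would be infinite for every $n$ and the conclusion would fail. I would close this gap by observing that the map $x \mapsto |x|^{2+2\alpha}$ is itself a controlled function in the sense of \eqref{eq:controlled1} (polynomial growth being dominated by $c_1\exp(c_2|x|^{\lambda})$ for any $1 \le \lambda < 2$ as $|x|\to\infty$), so that the integrability $\lL \mu^{2+2\alpha}\lR < \infty$ is the natural entry-distribution analogue of the controlled-function integrability already imposed on $\mu_{X_0}$ and $\mu_W$ in Section~\ref{Sec:Assumption}. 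With this finiteness granted, the power counting above is immediate and no further estimation is required.
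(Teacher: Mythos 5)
Your proof is correct and follows essentially the same route as the paper's: write $B=\sqrt{n}\,A(n)\sim\mu$, pull out the factor $n^{-(1+\alpha)}$ by homogeneity, and use $\alpha>1$ to conclude $n^{-(1+\alpha)}=o(n^{-2})$. Your added remark that the finiteness of $\mathbb{E}[B^{2+2\alpha}]$ does not follow from the stated first- and second-moment hypotheses is fair --- the paper silently assumes this finiteness as well when it treats $\mathbb{E}[B^{2(1+\alpha)}]$ as a constant --- but it does not change the argument.
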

\begin{proof}
See Appendix~\ref{ProofPS4}.
\end{proof} 
    \noindent We now present a the key proposition in our SE analysis.  
    \begin{proposition}\label{ps6}
    We let \small$\bA(N) \in \mathbb{R}^{n \times N}$\normalsize~be a random matrix defined as in Section~\ref{definitionA}.
    Suppose that  \small$\mathbf{v}(N) \in \mathbb{R}^{N \times 1}$\normalsize~is a deterministic vector satisfying \small$\lim_{N \rightarrow \infty} \langle \mathbf{v}^2(N) \rangle = s^2_0 < \infty$\normalsize~and \small$\limsupN  \frac{1}{N}\|\bv(N)\|^{2+2\alpha}_{2+2\alpha} < \infty$\normalsize~for a constant $\alpha >1$.
    Then, the following empirical law converges, \small$\widehat{\bA(N)\bv(N)} \xRightarrow{\text{d}} \mathcal{N}\left(0,{s^2_0}/{\rho}\right)$ as $N \rightarrow \infty$\normalsize~while $\rho = \frac{n}{N}$ is kept being constant.  
    \end{proposition}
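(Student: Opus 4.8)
The plan is to view each coordinate of $\bu = \bA(N)\bv(N)$ as a sum of an independent triangular array, apply the Lindeberg--Feller variant of Proposition~\ref{ps2} row by row, and then recover the empirical law as a uniform average of the per-row Gaussian limits. Write $u_i = \sum_{j=1}^{N} A_{ij} v_j$ for $i = 1,\dots,n$. Since the entries are independent with $\E[A_{ij}] = 0$ and $\E[A_{ij}^2] = 1/n$, the coordinate $u_i$ has mean zero and variance $\frac1n\sum_{j=1}^N v_j^2 = \frac1\rho \lL\bv^2(N)\lR \to s_0^2/\rho$, which is the \emph{same} for every row $i$. Writing $\mu_{u_i}$ for the law of $u_i$, the empirical law of $\bu$ is the mixture $\widehat{\bA(N)\bv(N)} = \frac1n\sum_{i=1}^n \mu_{u_i}$, so it suffices to show that each $\mu_{u_i}$ converges to $\cN(0, s_0^2/\rho)$ and that this convergence is uniform in $i$.

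First I would fix a row $i$ and set $X_{N,j} = A_{ij}v_j$, $1 \le j \le N$, an independent, mean-zero triangular array. Condition (i) of Proposition~\ref{ps2} is exactly the variance computation above and holds with the same limit for every row. For condition (ii), I would estimate, for the $\al > 1$ furnished by the hypotheses,
\begin{equation*}
\sup_j \E[X_{N,j}^{2+2\al}] \le \Big(\sup_j |v_j|^{2+2\al}\Big)\, \sup_{i,j}\E[A_{ij}^{2+2\al}] \le \|\bv(N)\|^{2+2\al}_{2+2\al}\, \sup_{i,j}\E[A_{ij}^{2+2\al}].
\end{equation*}
Proposition~\ref{ps5} gives $\E[A_{ij}^{2+2\al}] = o(n^{-2})$, while the assumption $\limsupN \frac1N \|\bv(N)\|^{2+2\al}_{2+2\al} < \infty$ yields $\|\bv(N)\|^{2+2\al}_{2+2\al} \le C N = C n/\rho$ for all large $N$. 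Multiplying, $\sup_j \E[X_{N,j}^{2+2\al}] = o(n^{-1}) = o(N^{-1})$ (interchangeable since $n/N=\rho$ is constant), so condition (ii) holds, and crucially the bound does not depend on $i$. Proposition~\ref{ps2} then gives $u_i = \sum_{j=1}^N X_{N,j} \xRightarrow{d} \cN(0, s_0^2/\rho)$ for each row.

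Finally I would pass from the per-row limits to the mixture. Because conditions (i) and (ii) hold with bounds independent of $i$, the Lindeberg--Feller estimates underlying Proposition~\ref{ps2} deliver convergence of the characteristic functions $\phi_{u_i}(t) \to \exp(-t^2 s_0^2/(2\rho))$ that is uniform in $i$. The characteristic function of the empirical law is the average $\frac1n\sum_{i=1}^n \phi_{u_i}(t)$, which therefore converges to the same Gaussian limit for every $t$; L\'evy's continuity theorem then yields $\widehat{\bA(N)\bv(N)} \xRightarrow{d} \cN(0, s_0^2/\rho)$. I expect the main obstacle to be precisely this uniformity step: the entries are not identically distributed, so a single pointwise CLT per row does not by itself control the mixture, and one must read off from Proposition~\ref{ps2} that the rate at which $\mu_{u_i}$ approaches the Gaussian is governed only by the row-independent variance sum and the uniform moment bound. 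A secondary, more routine point is that the crude bound $\sup_j|v_j|^{2+2\al} \le \|\bv(N)\|^{2+2\al}_{2+2\al}$ is wasteful, but the extra decay $o(n^{-2})$ from Proposition~\ref{ps5} (available only because $\al>1$) leaves enough room to absorb the resulting factor $N$.
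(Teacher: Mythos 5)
Your proposal is correct and follows essentially the same route as the paper's proof: fix a row $i$, form the independent zero-mean triangular array $X_{N,ij}=A_{ij}(N)v_j(N)$, verify condition (i) of Proposition~\ref{ps2} by the variance computation $\frac{1}{n}\sum_j v_j^2 \to s_0^2/\rho$, and verify condition (ii) by combining Proposition~\ref{ps5} with the same ``crude'' bound $|v_j|^{2+2\alpha}\le\|\bv(N)\|^{2+2\alpha}_{2+2\alpha}=\frac{n}{\rho}\cdot\frac{1}{N}\|\bv(N)\|^{2+2\alpha}_{2+2\alpha}$ that the paper itself uses. The only place you go beyond the paper is the explicit uniform-in-$i$ characteristic-function argument for passing from the per-row limits to the empirical (mixture) law; the paper asserts that final step without comment, so your added care there is a refinement of the same argument rather than a different approach.
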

    \begin{proof}   
    See Appendix~\ref{ProofPS6}.
    \end{proof}
   \noindent The next proposition is a direct consequence  of {Proposition}~\ref{ps6}.
   \begin{proposition}
   \label{ps7}
   Suppose $\bu \in \R^{N \times 1}$ and $\bv \in \R^{n \times 1}$ with \small$\|\bu\|_2 = \|\bv\|_2 = 1$. 
   Then $\bv^*\bA\bu \xRightarrow{d}  \frac{Z}{\sqrt{n}}$ as $N \rightarrow \infty$, where the matrix $\bA$ is defined as in {Proposition}~\ref{ps6} and $Z \sim \cN(0,1)$. 
   \end{proposition}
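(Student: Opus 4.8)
The plan is to prove the rigorous reformulation $\sqrt{n}\,\bv^*\bA\bu \xRightarrow{d} \cN(0,1)$, since $\rho = n/N$ is held fixed and the factor $1/\sqrt{n}$ is exactly what the normalization by $\sqrt{n}$ exposes. First I would expand the bilinear form into a single sum of independent contributions, $\bv^*\bA\bu = \sum_{i=1}^{n}\sum_{j=1}^{N} v_i u_j A_{ij}$, and observe that, because the $A_{ij}$ are mutually independent over all $(i,j)$, the collection $\{\, Y_{ij} := \sqrt{n}\, v_i u_j A_{ij} \,\}$ forms an independent (doubly-indexed, hence flattened) triangular array with $\E[Y_{ij}]=0$. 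This recasts the claim as a central limit theorem for a sum of $nN$ independent terms, for which Proposition~\ref{ps2} is tailor-made.

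Next I would verify the two hypotheses of Proposition~\ref{ps2}. For the variance condition, using $\E[A_{ij}^2] = \tfrac{1}{n}\lL\mu_{ij}\lR_2 = \tfrac1n$ together with $\|\bv\|_2=\|\bu\|_2=1$, the total second moment telescopes, $\sum_{i,j}\E[Y_{ij}^2] = n\cdot\tfrac1n\,\|\bv\|_2^2\|\bu\|_2^2 = 1$, so $\sigma_x^2 = 1$. For the Lyapunov-type condition I would invoke Proposition~\ref{ps5}, which gives $\E[A_{ij}^{2+2\alpha}] = o(n^{-2})$ for $\alpha>1$, and bound $\sup_{i,j}\E[Y_{ij}^{2+2\alpha}] = n^{1+\alpha}\,\bigl(\sup_{i,j}|v_i|^{2+2\alpha}|u_j|^{2+2\alpha}\bigr)\,\E[A_{ij}^{2+2\alpha}]$. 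Showing that this is $o((nN)^{-1})$ — the decay required for an array of $nN$ terms — is where I expect the genuine work to sit.

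The main obstacle is precisely this moment estimate: with only $\|\bv\|_2=\|\bu\|_2=1$ one cannot exclude spiky vectors such as $\bu=\be_1$, for which $\sqrt{n}\,\bv^*\bA\bu$ collapses to a single rescaled entry $\mu_{i1}$ and need not be Gaussian. The argument therefore tacitly needs a delocalization control, e.g. $\max_i|v_i|$ and $\max_j|u_j|$ decaying fast enough (as holds automatically for the $\bv$, $\bu$ generated by the AMP recursion, whose mass is spread over $\Theta(n)$ and $\Theta(N)$ coordinates). Under such control the factor $\sup_{i,j}|v_i|^{2+2\alpha}|u_j|^{2+2\alpha}$ supplies the missing negative powers of $n$ and $N$, the bound $o((nN)^{-1})$ follows, and Proposition~\ref{ps2} yields $\sum_{i,j}Y_{ij}\xRightarrow{d}\cN(0,1)$.

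Finally, I would record the promised link to Proposition~\ref{ps6}: setting $\tilde\bu=\sqrt{N}\,\bu$ so that $\lL\tilde\bu^2\lR\to 1$, Proposition~\ref{ps6} already certifies that the entries of $\bA\tilde\bu$ are asymptotically $\cN(0,1/\rho)$ and independent across rows. The present proposition is the one-dimensional projection of that statement onto the unit direction $\bv$, and the flattened-array CLT above is exactly the mechanism that promotes the entrywise/empirical-law convergence of Proposition~\ref{ps6} to convergence of the scalar functional $\bv^*\bA\bu$.
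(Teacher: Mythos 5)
The paper never writes out a proof of Proposition~7: it is simply asserted to be a ``direct consequence'' of Proposition~6 (and, per Remark~2, a generalization of Lemma~2 of Bayati--Montanari). Your flattened triangular-array argument is therefore a genuine proof where the paper offers none, and its skeleton is right: expanding $\bv^*\bA\bu=\sum_{i,j}v_iu_jA_{ij}$, computing $\sum_{i,j}\E[(\sqrt{n}\,v_iu_jA_{ij})^2]=\|\bv\|_2^2\|\bu\|_2^2=1$, and feeding the $nN$ independent summands into Proposition~2 is exactly the mechanism the authors' own toolkit supplies. Your variance computation and your reduction of the higher-moment condition to $\sup_{i,j}|v_i|^{2+2\alpha}|u_j|^{2+2\alpha}\,\E[B^{2+2\alpha}]=o((nN)^{-1})$ are both correct.

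More importantly, the obstacle you isolate is a defect of the proposition as stated, not of your argument. With only $\|\bu\|_2=\|\bv\|_2=1$ the claim is false: taking $\bu=\be_1\in\R^{N\times 1}$ and $\bv=\be_1\in\R^{n\times 1}$ gives $\sqrt{n}\,\bv^*\bA\bu=\sqrt{n}\,A_{11}\sim\mu_{11}$, which has mean zero and unit variance but need not be Gaussian. The same hidden hypothesis surfaces if one tries to derive the result from Proposition~6: after rescaling $\tilde\bu=\sqrt{N}\bu$ so that $\langle\tilde\bu^2\rangle=1$, Proposition~6 additionally demands $\limsup_N \frac{1}{N}\|\tilde\bu\|_{2+2\alpha}^{2+2\alpha}=\limsup_N N^{\alpha}\|\bu\|_{2+2\alpha}^{2+2\alpha}<\infty$, which is precisely the delocalization you call for (and which forces $\sup_j|u_j|^{2+2\alpha}=O(N^{-\alpha})$, closing your moment bound for $\alpha>1$). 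In the two places the paper actually invokes Proposition~7 --- Steps~3c and 4c, applied to $\bb^{t_1}/\|\bb^{t_1}\|_2$ and $\bq^t_{\perp}/\|\bq^t_{\perp}\|_2$ and their Step-4 analogues --- this condition is available from part~(b) of Theorem~1 together with the convergence of $\langle\bq^t_\perp,\bq^t_\perp\rangle$, so the downstream use is safe; but the hypothesis should appear in the statement. One further merit of your route: Proposition~6 only controls the empirical law of $\bA\tilde\bu$, and passing from that to the scalar functional $\bv^*\bA\tilde\bu$ requires a second CLT across the $n$ rows anyway, so the single flattened-array application of Proposition~2 is the cleaner and more honest derivation.
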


\begin{remark}\label{rmk2}
It is worth noting that Proposition~\ref{ps7} is a generalization of  Lemma 2 in \cite{Bayati11}. In \cite{Bayati11}, the standard properties of Gaussian matrices are used to show Proposition~\ref{ps7}. 
However, this can not be directly extended to the measurement matrices in $\cI$ because its entries are not i.i.d.
\vspace{-0.2cm}
\end{remark}
\subsection{State Evolution of General AMP Algorithm}
     Suppose that 
     \vspace{-0.0cm}
     \small\begin{equation}
     \label{eq:q0}
      \lim_{N \rightarrow \infty} \langle \bq^0, \bq^0 \rangle/\rho = \sigma_0^2 < \infty,~ \limsupN 
 \|\bq^0\|^{2+2\alpha}_{2+2\alpha}/N < \infty,
     \vspace{-0.0cm}
     \end{equation}\normalsize
     for a constant $\alpha > 1$. 
     Then, the SE for the general AMP in \eqref{eq:6} is described \cite{Bayati11} by 
\vspace{-0.0cm}
\small\begin{equation}
	\label{eq:SE}
	\tau_t^2 = \mathbb{E}\left[g_t^2(\sigma_t Z,W)\right],~~ \sigma_t^2 = \mathbb{E}\left[f_{t}^2(\tau_{t-1}Z,X_0)\right]/\rho,
 \vspace{-0.0cm}
\end{equation}\normalsize 
where \small$X_0 \sim \mu_{X_0}, W \sim \mu_{W}$\normalsize, and \small$Z \sim \mathcal{N}(0,1)$\normalsize~is independent of $W$ and $X_0$. 
The SE in \eqref{eq:SE} has the same expression as the one in \cite{Bayati11, Rush2018} except that the matrix $\bA$ in our setting follows the new assumptions in Section~\ref{definitionA} and the controlled function. 

Define a collection of vectors as a set \sloppy \small$\cF_{t_1,t_2} = \{\bb^0,\ldots, \bb^{t_1\-1}, \bm^0,\dots,\bm^{t_1\-1}, \bh^1,\dots,\bh^{t_2},\bq^0,\dots,\bq^{t_2},\bx_0, \bw\}$\normalsize. 
The recursion in \eqref{eq:6} can then be represented by incorporating matrices as \small$\bY_t = \bA \bQ_t$\normalsize~with \small$\bY_t = [\bb^0|{\bb^{1} + \lambda_{1}\bm^{0}}|\dots|\bb^{t-1} + \lambda_{t-1}\bm^{t-2}] \in \R^{n \times t}$\normalsize~and \small$\bQ_t = [\bq^0|\bq^1| \dots|\bq^{t-1}] \in \R^{N \times t}$\normalsize, and \small$\bX_t = \bA^*\bM_t$\normalsize~with \small$\bX_t = [\bh^1+\xi_0\bq^0|\dots| \bh^t +\xi_{t-1}\bq^{t-1}] \in \R^{N \times t}$\normalsize~and \small$\bM_t = [\bm^0|\dots|\bm^{t-1}] \in \R^{n \times t}$\normalsize. 
The $\bm^t_{||}$ and $\bq^t_{||}$ denote, respectively, the projection of $\bm^t$ and $\bq^t$ onto the column spaces of $\bM_t$ and $\bQ_t$,
\vspace{-0.1cm}
\small\begin{equation}\label{eq:projection1}
   \bm^t_{||} = {\sum_{i=0}^{t-1}\zeta_i\bm^i},~\bq^t_{||} = \sum_{i=0}^{t-1}\beta_i\bq^i,
\vspace{-0.1cm}
\end{equation}\normalsize
\noindent where $\zeta_i$ and $\beta_i$ quantify, respectively, the contributions of $\bm^i$ and $\bq^i$ to the projected images of $\bm^t$ and $\bq^t$.
Then their null space projections 
$\bq^t_{\perp}$ and $\bm^t_{\perp}$ are defined as 
\vspace{-0.1cm}
\small\begin{equation}\label{eq:projection2}
  { \bq^t_{\perp} = \bq^t - \bq^t_{||}},~ \bm^t_{\perp} = \bm^t - \bm^t_{||}.
   \vspace{-0.1cm}
\end{equation}\normalsize
The following theorem is the main result of this work.  
\begin{theorem} \label{theorem1}
Suppose $\{\tau_t\}_{t\geq 0}$ and $\{\sigma_t\}_{t\geq 0}$ defined in \eqref{eq:SE}. 
Given the AMP recursion in \eqref{eq:6}, the following propositions hold for $t \geq 0$. \\
    a)
    \vspace{-0.3cm}
    \small\begin{equation}
    \label{eq:b1}
    \bb^t|_{\cF_{t,t}} \xRightarrow{\text{d}} \sum_{j=0}^{t-1}\beta_j\bb^j +\widetilde{\bA}\bq^{t}_{\perp}, \text{ as } n \rightarrow \infty,	\end{equation}\normalsize
    \vspace{-0.2cm}
    \small\begin{equation}
    \label{eq:h1}
    \bh^{t+1}|_{\cF_{t+1,t}} \xRightarrow{\text{d}} \sum_{j=0}^{t-1}{\zeta_j}\bh^{j+1} +\widetilde{\bA}^*\bm^t_{\perp}, \text{ as } N \rightarrow \infty,
    \vspace{-0.2cm}
    \end{equation}\normalsize
    where $\widetilde{\bA} \ddis \bA$.\\
    \vspace{-0.3cm}
    b) For a constant $\alpha > 1$, 
    \vspace{0.2cm}
    \small\begin{equation}
    \label{eq:alphaQ}    
    \limsupN \frac{1}{N} \|\bq^t_{\perp}\|^{2+2\alpha}_{2+2\alpha} < \infty, 
    \end{equation}
    \vspace{-0.1cm}
    \begin{equation}
    \label{eq:alphaM}    
    \limsupn \frac{1}{n} \|\bm^t_{\perp}\|^{2+2\alpha}_{2+2\alpha} < \infty.
    \vspace{-0.2cm}
    \end{equation}\normalsize\\  
    c) For all $0 \leq t_1,t_2 \leq t$, 
   \vspace{0.1cm}
    \small\begin{equation}
    \label{eq:b2}
    \limn \lL \bb^{t_1},\bb^{t_2} \lR \das \frac{1}{\rho} \limN \lL \bq^{t_1},\bq^{t_2}\lR < \infty.
    \end{equation}
    \vspace{-0.2cm}
    \begin{equation}
    \label{eq:h2}
    \limN \lL\bh^{t_1+1},\bh^{t_2+1} \lR \das \limn \lL \bm^{t_1},\bm^{t_2}\lR < \infty.
    \end{equation}\normalsize\\
    \vspace{-0.2cm}
    d) For all controlled functions $ \phi_b$ and $\phi_h$: $\mathbb{R}^{t+2} \rightarrow \mathbb{R}$,  
    \small
    \begin{equation}
    \label{b:b} 
    \limn  \frac{1}{n} \sum_{i=1}^{n}\phi_b(\bu^t_i) 	\das  \mathbb{E}\left[\phi_b(\sigma_0\widetilde{Z}_0,.., \sigma_t\widetilde{Z}_t, W)\right],
    \end{equation}	
    \vspace{-0.2cm}
    \begin{equation}
    \label{b:h} 
	 \limN \frac{1}{N}\sum_{i=1}^{N}\phi_h(\bv^t_i) 
	 \das 
	 \mathbb{E}\left[\phi_h(\tau_0Z_0, ..,\tau_tZ_t, X_0)\right], 
    \vspace{-0.2cm}
    \end{equation}\normalsize
    where $\bu^t_i = (b^0_i \dots, b^{t}_i,w_i)$ and $\bv^t_i = (h^1_i,.., h^{t\+1}_i,x_{0i})$. 
    The $(Z_0,\dots, Z_t)$ and $(\widetilde{Z}_0, \dots, \widetilde{Z}_t)$ are Gaussian vectors, where $Z_j$ and $\widetilde{Z}_j$ are i.i.d. following $\cN(0,1)$, for $j=1,2,\dots, t$, and independent of $X_0$ and $W$.
\end{theorem}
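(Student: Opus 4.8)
The plan is to prove the four statements a)--d) simultaneously by induction on $t$, following the conditioning scheme of \cite{Bayati11} but replacing the Gaussian-specific tools with Proposition~\ref{ps4}, Proposition~\ref{ps6}, and Proposition~\ref{ps7}, and upgrading the admissible test-function class to controlled functions. Let $\cH(t)$ denote the conjunction of a)--d) up to iteration $t$. For the base case $t=0$, the recursion \eqref{eq:6} gives $\bb^0 = \bA\bq^0$ because $\bh^0 = \mathbf{0}_N$ and $\bm^{-1}=\mathbf{0}_n$; the moment hypotheses \eqref{eq:q0} on $\bq^0$ are exactly those required by Proposition~\ref{ps6}, which yields $\widehat{\bb^0} \xRightarrow{d} \cN(0,\sigma_0^2)$. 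Statement b) at $t=0$ is the assumed $(2+2\alpha)$-bound on $\bq^0$, and c) follows from the definition of $\sigma_0^2$; the delicate point, present already in the base case, is deducing d) from the distributional limit, which I address below.

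For the inductive step, assuming $\cH(t-1)$, I would first establish the statements concerning $\bb^t$ conditioned on $\cF_{t,t}$ and then those concerning $\bh^{t+1}$ conditioned on $\cF_{t+1,t}$. The conditional law of $\bA$ given $\cF_{t,t}$ comes from Proposition~\ref{ps4} applied with $\bM=\bM_t$, $\bQ=\bQ_t$, $\bX=\bX_t$, $\bY=\bY_t$, yielding $\bA = \bP^{\perp}_{\bM_t}\tbA\bP^{\perp}_{\bQ_t} + \bB$. Acting on $\bq^t$, the deterministic correction $\bB\bq^t$ collapses into the parallel sum $\sum_{j=0}^{t-1}\beta_j\bb^j$ via the projection identities \eqref{eq:projection1}--\eqref{eq:projection2}, while the residual reduces to $\tbA\bq^t_{\perp}$ since $\bP^{\perp}_{\bQ_t}\bq^t = \bq^t_{\perp}$; this proves \eqref{eq:b1}, and \eqref{eq:h1} is symmetric. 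For b), I would write $\bq^t = f_t(\bh^t,\bx_0)$ with $f_t$ controlled, use $\bq^t_{\perp}=\bq^t-\bq^t_{||}$, and bound the $(2+2\alpha)$-norm by triangle inequality: the parallel part is controlled by the inductively known inner products, while $\frac1N\|f_t(\bh^t,\bx_0)\|^{2+2\alpha}_{2+2\alpha}$ is dominated entrywise using \eqref{eq:controlled1} by a controlled function of $(h^t_i,x_{0i})$, whose empirical average converges to a finite Gaussian expectation by the inductive statement d) and the $\cL^2$-integrability of controlled functions against the Gaussian measure. Statement c) then follows from Proposition~\ref{ps7} evaluated on normalized inner products and closes the SE recursion \eqref{eq:SE}, since the limiting variances $\tau_t^2,\sigma_t^2$ are exactly the Gaussian second moments of $g_t,f_t$.

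The heart of the argument, and the main obstacle, is statement d). Proposition~\ref{ps6} delivers only convergence in distribution of the empirical law of the fresh term $\tbA\bq^t_{\perp}$ to a Gaussian, whereas d) asserts almost-sure convergence of empirical averages of a controlled test function $\phi_b$, which grows exponentially and is therefore neither bounded nor continuous with compact support. The plan is to condition on $\cF_{t,t}$ so that, by a), the only new randomness entering $\phi_b(\bu^t_i)$ is the asymptotically-Gaussian component $\tbA\bq^t_{\perp}$; convergence in distribution then identifies the limiting integrand as $\phi_b(\sigma_0\widetilde{Z}_0,\dots,\sigma_t\widetilde{Z}_t,W)$, with covariance structure fixed by the orthogonality of $\bq^t_{\perp}$ to the past and by \eqref{eq:SE}. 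To pass from distributional convergence to convergence of expectations, I would establish uniform integrability of $\{\phi_b(\bu^t_i)\}$ by dominating $|\phi_b|^{1+\eta}$, for small $\eta>0$, by another controlled function (the class \eqref{eq:controlled1} is closed under such powers) and invoking the $\cL^1$/$\cL^2$ integrability of controlled functions against the Gaussian measure guaranteed in Section~\ref{Sec:Assumption}; the $(2+2\alpha)$ moment bounds of b) supply the uniform control needed to sustain this domination along the triangular array. This uniform-integrability step is precisely where the absence of pseudo-Lipschitz (polynomial) smoothness must be compensated by measurability against the Gaussian measure, and it is the technically heaviest part of the induction.
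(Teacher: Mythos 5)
Your overall architecture---induction on $t$, conditioning via Proposition~\ref{ps4}, the Lindeberg--Feller CLT of Proposition~\ref{ps6} for the fresh term $\widetilde{\bA}\bq^t_{\perp}$, Proposition~\ref{ps7} for the cross terms in c), and controlled-function growth bounds for the moment estimates in b)---is the same as the paper's, and your triangle-inequality route for b) is if anything cleaner than the paper's entrywise comparison of $\bq^t_{\perp}$ with $\bq^t$. However, two steps as you have written them would not close.

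First, in a) you assert that the correction $\bB\bq^t$ ``collapses into the parallel sum'' exactly. It does not: the conditioning yields $\bb^t|_{\cF_{t,t}} \ddis \sum_{j}\beta_j\bb^j + \widetilde{\bA}\bq^t_{\perp} - \bP_{\bM_t}\widetilde{\bA}\bq^t_{\perp} + \bM_t\overrightarrow{\bo}_t(1)$, because the coefficients produced by $\bB$ only converge to the $\beta_j$ asymptotically. Killing the last two terms is precisely the technical point the paper advertises: $\bP_{\bM_t}\widetilde{\bA}\bq^t_{\perp}$ vanishes by Proposition~\ref{ps3}, and $\bM_t\overrightarrow{\bo}_t(1)$ is shown to vanish by computing the empirical mean and empirical variance of that vector, using the induction hypothesis d) to bound $\frac1n\sum_i|m^j_i|$ and $\lL\bm^j,\bm^j\lR$. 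In the pseudo-Lipschitz setting this deviation term is absorbed into the test function by continuity; that device is unavailable here, so the step cannot be skipped.

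Second, and more seriously, d) asserts \emph{almost-sure} convergence of $\frac1n\sum_i\phi_b(\bu^t_i)$, whereas your plan---weak convergence of the empirical law plus uniform integrability of $\{\phi_b(\bu^t_i)\}$---only upgrades convergence in distribution to convergence of \emph{expectations}. It supplies no concentration of the empirical average around its conditional mean. The paper closes this by the decomposition $\big|\frac1n\sum_i\phi_b(\widetilde{\bu}^t_i)-\E[\phi_b(\cdots)]\big|\le X^t_1+X^t_2$: the fluctuation $X^t_1$ of $\phi_b(\widetilde{\bu}^t_i)$ around $\E_Z[\phi_b(\widetilde{\bu}^t_i)]$ is handled by the triangular-array strong law (Lemma~\ref{lm3}), which requires the $(2+\kappa)$-moment bound $\frac1n\sum_i\E_Z[|X^t_{n,i}|^{2+\kappa}]\le cn^{\kappa/2}$ obtained from the controlled growth bound via H\"older and Lyapunov and relies on the conditional independence of the fresh Gaussian component across coordinates; the bias $X^t_2$ is handled by noting that $\widetilde{\phi}_b(\bu^{t-1}_i)=\E_Z[\phi_b(\widetilde{\bu}^t_i)]$ is itself a controlled function of the \emph{previous} iterates, so the induction hypothesis d) applies to it, after which the limit is identified by the variance-matching step $\sum_{j}\beta_j\sigma_j\widetilde{Z}_j+\gamma_tZ=\sigma_t\widetilde{Z}_t$ coming from c). Your uniform-integrability argument produces the ingredients for the moment bound but omits both the law-of-large-numbers mechanism and the conditional-expectation decomposition that makes the induction hypothesis applicable to the bias term; without them the $\das$ claim is not established.
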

\begin{remark}
The results in \eqref{b:b} and \eqref{b:h} are similar to Lemma 1b in \cite{Bayati11} except that $\phi_b$ and $\phi_h$ in \eqref{b:b} and \eqref{b:h} are controlled functions.
The results in \eqref{eq:b1} and \eqref{eq:h1} represent  the convergence in the distribution of $\bb^t|_{\cF_{t,t}}$ and $\bh^{t+1}|_{\cF_{t+1,t}}$, which are obtained by applying the conditioning technique and Proposition~\ref{ps4} in our analysis. 
Compared to Lemma 1a in \cite{Bayati11}, the expressions in  \eqref{eq:b1} and \eqref{eq:h1} do not include the basis-aligned deviation terms. 
More specifically, the $\bb^t|_{\cF_{t,t}}$ and $\bh^{t+1}|_{\cF_{t+1,t}}$ in Lemma 1a of \cite{Bayati11} are 
\vspace{-0.2cm}
\small\begin{equation}\label{eq:b1B}
\bb^t|_{\cF_{t,t}} \ddis \sum_{j=0}^{t-1}\beta_j\bb^j +\widetilde{\bA}\bq^{t}_{\perp}+\widetilde{\bM}_t\overrightarrow{\bo}_t(1),\end{equation}\normalsize
\vspace{-0.2cm}
\small\begin{equation}
\label{eq:h1B}
\bh^{t+1}|_{\cF_{t+1,t}} \ddis \sum_{j=0}^{t-1}{\zeta_j}\bh^{j+1} +\widetilde{\bA}^*\bm^t_{\perp}+ \widetilde{\bQ}_{t+1}\overrightarrow{\bo}_t(1),
\vspace{-0.2cm}
 \end{equation}\normalsize
where \small$\widetilde{\bM}_t \in \R^{n \times t}$\normalsize~and \small$\widetilde{\bQ}_{t+1} \in \R^{N \times (t+1)}$\normalsize~are the  orthogonal bases of the column subspaces of $\bM_t$ and {$\bQ_{t+1}$}, respectively.
The $\overrightarrow{\bo}_t(1)$ denotes a vector in $\R^{t \times 1}$ such that all entries converge to $0$ almost surely as $N \rightarrow \infty$.
The common procedure for characterizing the SE in  \cite{Bayati11,Rush2018,Takeuchi2020,Rangan2019,Zhou2020} is to cancel the deviation terms \small$\widetilde{\bM}_t\overrightarrow{\bo}_t(1)$\normalsize~and \small$\widetilde{\bQ}_{t+1}\overrightarrow{\bo}_t(1)$\normalsize~in \eqref{eq:b1B} and \eqref{eq:h1B}, respectively, by assuming the smoothness of the pseudo-Lipschitz functions.    
The challenge in our proof of {Theorem}~\ref{theorem1} is that there is no such smoothness for $\phi_b$ and $\phi_h$ in \eqref{b:b} and \eqref{b:h}, respectively, because they are controlled functions.
It is shown in Section~\ref{secIV} that these deviation terms vanish in our derivation, which leverages the concept of empirical statistics developed in Section~\ref{secIIA}.   
We note that the SE in \eqref{eq:SE} is a special case of Theorem~\ref{theorem1}; its detailed derivation is relegated to Appendix~\ref{AppendixA}.
\end{remark}
\begin{remark}
    The results in Theorem~\ref{theorem1} directly verify the conjecture about the validity of SE for i.i.d. non-Gaussian measurement ensemble in \cite{Bayati11} because the set of such $\bA$ is a subset of the $\bA$ defined in Section~\ref{definitionA}.  
\end{remark}
\section{Proof of Theorem 1}
\label{secIV}

 The proof of {Theorem}~\ref{theorem1} is inspired by  \cite{Bayati11}.
 Since a part of the proof is based on a similar technique in \cite{Bayati11}, we refer to \cite{Bayati11} for those standard arguments, while we present the features that are unique and refined in this work.
The proof is based on mathematical induction in $t$ and the application of the conditioning technique on four sets $\cF_{0,0},\cF_{1,0},\cF_{t,t}$, and $\cF_{t+1,t}$ sequentially. 
\subsection{Step 1: Conditioning on $\cF_{0,0} = \{\bq^0,\bx_0,\bw\}.$}

a)  The convergence in  \eqref{eq:b1} holds because $\bb^0 = \bA\bq^0 \ddis \widetilde{\bA}\bq_{\perp}^0$, where $\bq_{\perp}^{0} = \bq^0$ because $\bQ_0$ is an empty matrix. 

b) The bound in \eqref{eq:alphaQ} is due to \eqref{eq:q0} and $\bq^0_{\perp} = \bq^0$.

c)
Given \eqref{eq:q0}, applying {Proposition}~\ref{ps6} to $\bA$ and $\bq^0$ leads to 
 \vspace{-0.2cm}
\small \begin{equation}
     \label{eq:1.1}
     \widehat{\bA \bq^0} \xRightarrow{\text{d}} \cN(0,\sigma^2_0).
 \vspace{-0.2cm}
 \end{equation}\normalsize
 Hence, by Step 1a), \small$\limn \lL \bb^0, \bb^0\lR \das \sigma^2_0$\normalsize, implying  \small$\limn \lL \bb^0, \bb^0 \lR \das \limN \frac{\lL \bq^0,\bq^0\lR}{\rho} < \infty$\normalsize, which completes the proof of \eqref{eq:b2}. 
d) We let $\widetilde{\bu}^0_i=(\sigma_0 \widetilde{Z}_0,w_i)$, $\forall i$, where $\widetilde{Z}_0 \sim \cN(0,1)$.
From \eqref{eq:1.1}, the convergence  $\bu^0_i \xRightarrow{\text{d}}  \widetilde{\bu}^0_i$ holds.
To prove \eqref{b:b}, we first claim that 
\small$\limn \frac{1}{n}\sumn \phi_b(\widetilde{\bu}^0_i) \das \E[\phi_b(\sigma_0\widetilde{Z}_0,W)]$\normalsize.
By the triangular inequality, we have 
\small$\Big| \frac{1}{n}\sumn \phi_b(\widetilde{\bu}^0_i) - \E[\phi_b(\sigma_0\widetilde{Z}_0,W)]\Big| \leq X^0_1 + X^0_2$\normalsize,
 where \small$X^0_1 = \Big|\frac{1}{n}\sumn \Big(\phi_b(\widetilde{\bu}^0_i) - \E_{\widetilde{Z}_0}[\phi_b(\widetilde{\bu}^0_i)]\Big)  \Big|$\normalsize~and \small$X^0_2 = \Big|\frac{1}{n}\sumn \E_{\widetilde{Z}_0}[\phi_b(\widetilde{\bu}^0_i)] - \E[\phi_b(\sigma_0\widetilde{Z}_0,W)] \Big|$\normalsize. 
 The goal is to prove that \small$\limn X^0_1 \das 0$\normalsize~and \small$\limn X^0_2 \das 0$\normalsize, respectively. 
Since $\phi_b$ is a controlled function, we get by \eqref{eq:controlled2}  \small$|\phi_b(\widetilde{\bu}^0_i)| \leq c^0_1 \exp(c^0_2(|\sigma_0\widetilde{Z}_0|^{\lambda} + |w_i|^{\lambda}))$\normalsize, where $c^0_1>0$, $c^0_2>0$, and $1 \leq \lambda <2$ are constant. 
Hence, $\E_{\widetilde{Z}_0}[|\phi_b(\widetilde{\bu}^0_i)|^{2+\kappa}] \leq c^0_3 \exp(c^0_4|w_i|^{\lambda}) \E_{\widetilde{Z}_0}[\exp(c^0_4|\sigma_0\widetilde{Z}_0|^{\lambda})]$, where $0<\kappa <1$, $c^0_3 = (c^0_1)^{2+\kappa}$, and $c^0_4 = c^0_2(2+\kappa)$.
Thus, 
\begin{equation}
    \label{eq:1.2}
    \E_{\widetilde{Z}_0}[|\phi_b(\widetilde{\bu}^0_i)|^{2+\kappa}] \leq c^0_5 \exp(c^0_4|w_i|^{\lambda}),
\end{equation}
where $c^0_5 = c^0_3 \E_{\widetilde{Z}_0}[\exp(c^0_4|\sigma_0\widetilde{Z}_0|^{\lambda})]$ is constant.
Define $X^0_{n,i} = \phi_b(\widetilde{\bu}^0_i) - \E_{\widetilde{Z}_0}[\phi_b(\widetilde{\bu}^0_i)]$, $\forall i$. 
To prove the thesis $\limn X^0_1 \das0$, we show that $\{X^0_{n,i}\}^n_{i=1}$ satisfy {Lemma}~\ref{lm3} in Appendix~\ref{AppLM}.
Indeed, applying Holder's inequality in Lemma~\ref{lm4} in Appendix~\ref{AppLM} to $X^0_{n,i}$ gives 
\vspace{-0.2cm}
\small\begin{subequations}
\label{eq:1.3.1}
\beq 
\E_{\widetilde{Z}_0}[|X^0_{n,i}|^{2+\kappa}] \d4&\leq&\d4 2^{1+\kappa} \left( \E_{\widetilde{Z}_0}[|\phi_b(\widetilde{\bu}^0_i)|^{2+\kappa}]  +  |\E_{\widetilde{Z}_0}[\phi_b(\widetilde{\bu}^0_i)]|^{2+\kappa} \right), \nonumber\\ 
\d4&\leq&\d4 c^0_6\exp(c^0_4|w_i|^{\lambda}),\label{eq:1.3e} 
\eeq 
\end{subequations}\normalsize
where \small$c^0_6 = 2^{2+\kappa}c^0_5$\normalsize~and \eqref{eq:1.3e} follows from Lemma~\ref{lm5} (Lyapunov's inequality) in Appendix~\ref{AppLM} and \eqref{eq:1.2}. 
Note that \small$\frac{1}{n} \sumn c^0_6\exp(c^0_4|w_i|^{\lambda})  
  \das \E[c^0_6\exp(c^0_4|W|^{\lambda})]$\normalsize. 
  For $n$ sufficiently large, using \eqref{eq:1.3e} gives      
\vspace{-0.2cm}\small\begin{equation}
  \frac{1}{n}\sumn \E_{\widetilde{Z}_0}[|X^0_{n,i}|^{2+\kappa}]  \leq \E[c^0_6\exp(c^0_4|W|^{\lambda})] < cn^{\kappa/2},\label{eq:1.4b}  
\vspace{-0.2cm}
\end{equation}\normalsize 
 where $c > 0$ is constant and \eqref{eq:1.4b} follows from the facts that $\E[c^0_6\exp(c^0_4|W|^{\lambda})]=c^0_7< \infty$ and there exists a constant $n_0 >0$ such that ${c^0_7}<cn^{\kappa/2}$ for $n >n_0$.
{Lemma}~\ref{lm3} in Appendix~\ref{AppLM} leads to $\limn \frac{1}{n}\sumn X^0_{n,i} \das 0$, implying $\limn X^0_1 \das 0$.   
 The rest of {Step}~1d) is showing the convergence \small$\limn X^0_2 \das0$\normalsize. 
 Define \small$\widetilde{\phi}_b(w_i) = \E_{\widetilde{Z}_0}[\phi_b(\widetilde{\bu}^0_i)]$\normalsize, $\forall i$.
 Then \small$\limn \frac{1}{n}\sumn \widetilde{\phi}_b(w_i) \das \E[\widetilde{\phi}_b(W)]$\normalsize~because $\widehat{\bw} \xRightarrow{\text{d}} \mu_W$.
 Hence, \small$\limn \frac{1}{n} \sumn \E_{\widetilde{Z}_0}[\phi_b(\widetilde{\bu}^0_i)] \das \E_{W}[\E_{\widetilde{Z}_0}[\phi_b(\sigma_0\widetilde{Z}_0,W)]]\das \E[\phi_b(\sigma_0\widetilde{Z}_0,W)]$\normalsize,
implying $\limn X^0_2 \das 0$, concludes \eqref{b:b}.  
 \subsection{Step 2: Conditioning on \small$\cF_{1,0} = \{\bb^0,\bm^0,\bq^0,\bx_0,\bw\}$\normalsize.}
a) From \eqref{eq:6}, $\bh^1|_{\cF_{1,0}} \ddis \bA^*|_{\cF_{1,0}}\bm^0 - \xi_0\bq^{0}$.
Conditioning on $\cF_{1,0}$ is equivalent to conditioning on $\{\bA|\bA\bq^0=\bb^0 \}$. 
Hence, applying {Proposition}~\ref{ps4} yields   \small$\bA|_{\cF_{1,0}} \ddis \widetilde{\bA}\bP^{\perp}_{\bq^{0}} +\frac{\bb^0\bq^{0*}}{\|\bq^0\|^2_2}$\normalsize.
Then $\bh^1|_{\cF_{1,0}} \ddis \bP^{\perp}_{\bq^0}\widetilde{\bA}^*\bm^0 + \frac{\bb^{0*}\bm^0}{\|\bq^0\|^2_2}\bq^0 - \xi_0\bq^0 $, resulting in  
\vspace{-0.2cm}
\small\begin{equation}\label{eq:2.1}
    \bh^1|_{\cF_{1,0}} \das  \widetilde{\bA}^*\bm^0-\bP_{\bq^0} \widetilde{\bA}^*\bm^0 +\Big(\rho \frac{\lL\bb^0,\bm^0\lR}{\lL\bq^0,\bq^0\lR}-\xi_0\Big)\bq^0.
    \vspace{-0.2cm}
\end{equation}\normalsize
Substituting \small$\bm^0 =g_0(\bb^0,\bw)$\normalsize~into \small$\lL \bb^0,\bm^0 \lR$\normalsize~in \eqref{eq:2.1} gives 
\small$\limn \lL \bb^0,\bm^0 \lR \das \E[\sigma_0 \widetilde{Z}_0g_0(\sigma_0 \widetilde{Z}_0,W)] \das \sigma^2_0 \E[g'_0(\sigma_0\widetilde{Z},W)] \das \limn \lL \bb^0,\bb^0\lR \lL g_0'(\bb^0,\bw) \lR$\normalsize, 
where the first equality is due to \eqref{b:b} applied to \small$\phi_b(\bu^0_i) = b^0_ig_0(b^0_i,w_i)$\normalsize, the second equality is due to Stein's Lemma (Lemma~\ref{lm2}) in Appendix~\ref{AppLM}, the third equality is obtained by setting \small$\phi_b(\bu^0_i) = g'_0(b^0_i,w_i)$\normalsize~in \eqref{b:b}.
Note that $\xi_0 = \lL g'_0(\bb^0,\bw)\lR$. Using \eqref{eq:b2} gives \small$\limn \lL \bb^0,\bm^0 \lR \das \limN \frac{\lL\bq^0,\bq^0 \lR}{\rho}\xi_0$\normalsize, which is plugged into   \eqref{eq:2.1} to result in 
    \vspace{-0.2cm}
    \small\begin{equation}
        \label{eq:2.3}
        \bh^1|_{\cF_{1,0}} \ddis  \widetilde{\bA}^*\bm^0-\bP_{\bq^0} \widetilde{\bA}^*\bm^0 +o(1)\bq^0,
    \vspace{-0.2cm}
    \end{equation}\normalsize
    where $\limn o(1) \das 0$. Using {Proposition}~\ref{ps3} in Appendix~\ref{AppLM}, the second term on the right-hand side {(r.h.s)} of \eqref{eq:2.3} converges to \small$\limN \bP_{\bq^0} \widetilde{\bA}^*\bm^0 \das \mathbf{0}_N$\normalsize. 
   We claim that the last term converges to $\limN o(1)\bq^0 \das \mathbf{0}_N$. Indeed, this is true because (i) the empirical expectation of $o(1)\bq^0$ satisfies \small$\limN \lL \widehat{o(1)\bq^0} \lR \das 0$\normalsize, which holds due to  $\limN |\lL\widehat{o(1)\bq^0}\lR| \leq \limN |o(1)| \frac{1}{N}\sumN |q^0_i| \das 0$ and (ii) the empirical variance of $o(1)\bq^0$ converges to $\limN \lL \widehat{o(1)\bq^0} \lR_2 = \limN [o(1)]^2\lL\bq^0,\bq^0 \lR \das 0$. 
    Thus,  $\bh^1|_{\cF_{1,0}} \xRightarrow{\text{d}} \widetilde{\bA}^*\bm^0$, which concludes \eqref{eq:h1} when $t = 0$ since $\bm^0 = \bm^0_{\perp}$; note that $\bM_0$ is an empty matrix.
    
    b) The bound in \eqref{eq:alphaM} is equivalent to \small$\limsupn \frac{1}{n} \sumn |m^0_i|^{2+2\alpha} < \infty$\normalsize. 
{    
Incorporating \small$\phi_b(b^0_i,w_i) = |g_0(b^0_i,w_i)|^{2+2\alpha}$\normalsize~for a constant $\alpha >1$ into \eqref{b:b} leads to $\limn \frac{1}{n} \sumn |m^0_i|^{2+2\alpha} \das \E[|g_0(\sigma_0\widetilde{Z}_0,W)|^{2+2\alpha}] < \infty$, which completes the proof.
}
    c) By \eqref{b:b} at $t=0$, we get  \small$\limn \lL g_0(\bb^0,\bw),g_0(\bb^0,\bw)\lR\das \E[g^2_0(\sigma_0\widetilde{Z}_0,W)]$\normalsize,~leading to $\limn \lL \bm^0,\bm^0\lR \das \tau^2_0$ due to the definitions of $\bm^0$ in \eqref{eq:6} and $\tau^2_0$ in \eqref{eq:SE}.
    Thus, {Proposition}~\ref{ps6} (Lindeberg-Feller) holds because $\bm_{\perp}^0 = \bm^0$ and \eqref{eq:alphaM}, resulting in \small$\widetilde{\bA}^*\bm^0 \xRightarrow{\text{d}} \cN(0,\tau^2_0)$\normalsize~and \small$\widehat{\bh^1}\xRightarrow{\text{d}} \cN(0,\tau^2_0)$\normalsize, i.e., \small$\limN \lL\bh^1,\bh^1\lR \das \tau^2_0$\normalsize. 
    Hence, \small$\limN \lL\bh^1,\bh^1\lR \!\das\! \limn \lL\bm^0,\bm^0 \lR \!<\! \infty$\normalsize, concluding \eqref{eq:h2}.
    d) We let $\widetilde{\bv}^0_i=(\tau_0Z_0,x_{0i})$, $\forall i$. 
    Because $\bv_i^0 \xRightarrow{\text{d}} \widetilde{\bv}^0_i$ holds due to \small$\widehat{\bh^1_i}\xRightarrow{\text{d}} \cN(0,\tau^2_0)$\normalsize~(Lindeberg-Feller), $\forall i$, 
    the proof of \eqref{b:h} is boiled down to showing \small$\limN \frac{1}{N}\sumN \phi_h(\widetilde{\bv}^0_i) \das \E[\phi_h(\tau_0Z_0,X_0)]$\normalsize. 
    Similar to {Step} 1d), by the triangular inequality \small$\Big|\frac{1}{N}\sumN \phi_h(\widetilde{\bv}^0_i) - \E[\phi_h(\tau_0Z_0,X_0)]\Big|\leq Y^0_1+Y^0_2$\normalsize, where \small$Y^0_1 = \Big|\frac{1}{N}\sumN \phi_h(\widetilde{\bv}^0_i) - \E_{Z_0}[\phi_h(\widetilde{\bv}^0_i)]\Big|$\normalsize~and  
    \small$Y^0_2 = \Big|\frac{1}{N} \sumN \E_{Z_0}[\phi_h(\widetilde{\bv}^0_i)] - \E[\phi_h(\tau_0Z_0,X_0)]\Big|$\normalsize.
    We claim that $\limN Y^0_1 \das 0$ and $\limN Y^0_2 \das 0$.
    The convergence $\limN Y^0_1 \das 0$ is treated first.
    Defining $Y^0_{N,i} = \phi_h(\widetilde{\bv}^0_i) - \E_{Z_0}[\phi_h(\widetilde{\bv}^0_i)]$, the proof is equivalent to showing that $\{Y^0_{N,i}\}_{i=1}^{N}$ satisfy {Lemma}~\ref{lm3} in Appendix~\ref{AppLM}.
    Since $\phi_h$ is a controlled function, the following holds \small$|\phi_h(\widetilde{\bv}^0_i)| \leq d^0_1 \exp(d^0_2(|\tau_0Z_0|^{\lambda} + |x_{0i}|^{\lambda}))$\normalsize, where $d^0_1>0$, $d^0_2>0$, and $1 \leq  \lambda < 2$ are constant.
    Hence, \small$\E_{Z_0}[|\phi_h(\widetilde{\bv}^0_i)|^{2+\kappa}] \leq d^0_3\exp(d^0_4|x_{0i}|^{\lambda})\E_{Z_0}[\exp(d^0_4|\tau_0Z_0|^{\lambda})]$\normalsize, where \small$0 < \kappa < 1$, $d^0_3 = (d^0_1)^{2+\kappa},$ and $d^0_4 = d^0_2(2+\kappa)$\normalsize.
    Therefore, 
    \small\begin{equation}
    \label{eq:2.4}
    \E_{Z_0}[|\phi_h(\widetilde{\bv}^0_i)|^{2+\kappa}] \leq d^0_5\exp(d^0_4|x_{0i}|^{\lambda}),
    \end{equation}\normalsize
    where \small$d^0_5 = d^0_3\E_{Z_0}[\exp(d^0_4|\tau_0Z_0|^{\lambda})]$\normalsize~ is a constant. 
    By {Lemma}~\ref{lm4} in Appendix~\ref{AppLM},
    \vspace{-0.2cm}
    \small\begin{subequations}
    \label{eq:2.5}
    \beq 
        \E_{Z_0}[|Y^0_{N,i}|^{2+\kappa}]\d4 &\leq&\d4 2^{1+\kappa}\left(\E_{Z_0}[|\phi_h(\widetilde{\bv}^0_i)|^{2+\kappa}] + |\E_{Z_0}[\phi_h(\widetilde{\bv}^0_i)]|^{2+\kappa} \right),\nonumber\\
        \d4&\leq&\d4 d^0_6 \exp(d^0_4|x_{0i}|^{\lambda})\label{eq:2.5e},
    \eeq
    \end{subequations}\normalsize 
    where $d^0_6 =2^{2+\kappa}d^0_5$ and  \eqref{eq:2.5e} follows from Lemma~\ref{lm5} in Appendix~\ref{AppLM} and \eqref{eq:2.4}.
    Note that $\frac{1}{N}\sumN d^0_6 \exp(d^0_4|x_{0i}|^{\lambda}) \das \E[d^0_6\exp(d^0_4|X_0|^{\lambda})]$. 
    For $N$ is sufficiently large,  we have 
    \vspace{-0.2cm}
    \small\begin{equation}
        \d4 \frac{1}{N}\sumN \E_{Z_0}[|Y^0_{N,i}|^{2+\kappa}] \leq \E[d^0_6\exp(d^0_4|X_0|^{\lambda})]<  cN^{\kappa/2}, \label{eq:2.6c}
    \vspace{-0.2cm}
    \end{equation}\normalsize 
    where $c>0$ is constant and the last estimation in \eqref{eq:2.6c} follows because  $\E[d^0_6\exp(d^0_4|X_0|^{\lambda})] = d^0_7 < \infty$ and there exists a constant $N_0 > 0$ such that $d^0_7 < cN^{\kappa/2}$ for $N > N_0$.
    Given \eqref{eq:2.6c}, applying {Lemma}~\ref{lm3} in Appendix~\ref{AppLM} to $\{Y^0_{N,i}\}_{i=1}^{N}$ gives $\limN \frac{1}{N}\sumN Y^0_{N,i} \das 0$, implying $\limN Y^0_1 \das 0$. 
    To show \small$\limN Y^0_{2}\das 0$, we define $\widetilde{\phi}_h(x_{0i}) = \E_{Z_0}[\phi_h(\widetilde{\bv}^0_i)]$\normalsize. 
    Then the following holds,
    \small$\limN \frac{1}{N} \sumN \widetilde{\phi}_h(x_{0i}) \das \E[\widetilde{\phi}_h(X_0)]$\normalsize~because \small$\widehat{\bx_0} \xRightarrow{\text{d}} \mu_{X_0}$\normalsize.
    Thus, \small$\limN \frac{1}{N} \sumN \E_{Z_0}[\phi_h(\widetilde{\bv}^0_i)] \das \E_{X_0}[\E_{Z_0}[\phi_h(\tau_0Z_0,X_0)]]=\E[\phi_h(\tau_0Z_0,X_0)]$\normalsize, resulting in $\limN Y^0_2 \das 0$, concluding \eqref{b:h}.   
    Suppose {Theorem}~\ref{theorem1} holds up to the $(t-1)$th iteration. 
    We prove that the thesis also holds for the $t$th iteration.
    Similar to the first two steps, we show \eqref{eq:b1}, \eqref{eq:alphaQ}, \eqref{eq:b2}, and \eqref{b:b} in {Step}~3 and \eqref{eq:h1}, \eqref{eq:alphaM}, \eqref{eq:h2},  and \eqref{b:h} in {Step}~4, respectively, which are more complex and thus, relegated to Appendix~\ref{AppendixC}. 

\section{Conclusion} 
The SE analysis for AMP was extended to the  class of measurement matrices with independent (not necessarily identically distributed) entries and the controlled functions. 
A variant of the Lindeberg-Feller was proposed to deal with the lack of the structure of the assumed measurement ensemble.
An empirical statistic-based conditioning technique was proposed to cope with the lack of smoothness of the controlled functions. 
The results revealed a new direction to the SE analysis for boarder classes of measurement ensembles and functions.   

\bibliographystyle{IEEEtran}
\bibliography{biblib}
\newpage
\appendices
\section{Pseudo-Lipschitz Function}\label{PLdef}
\begin{definition}
For a $k > 1$, a function $f:\R^{n \times 1} \rightarrow \R$ is said pseudo-Lipschitz of order $k$ if there exists a constant $L > 0$ such that $|f(\bx)-f(\by)| \leq L(1+\|\bx\|^{k-1}+\|\by\|^{k-1})\|\bx-\by\|$, for all $\bx, \by \in \R^{n \times 1}$;  the first order derivative of $f$ is bounded by a polynomial of order $(k-1)$, i.e., polynomial smoothness.
\end{definition}
\section{Proof of Proposition~\ref{ps1}}\label{ProofPS1}
    This follows from \small$\sum_{m=1}^{n}\mathbb{E}\big[ X^2_{n,m};|X_{n,m}| > \epsilon \big] \leq \sum_{m=1}^{n}\mathbb{E} \Big[\frac{X^{2+2\alpha}_{n,m}}{\epsilon^{2\alpha}} \Big]$\normalsize, where the right-hand side converges to zero because $\frac{n}{\epsilon^{2\alpha}}o(n^{-1}) \rightarrow 0$ as $n \rightarrow \infty$.
\section{ Proof of Proposition~\ref{ps4}}\label{ProofPS3}
 Recalling the following subspace decomposition \small$\bA = \bP^{\perp}_{\bM}\bA\bP^{\perp}_{\bQ} + \bA\bP_{\bQ} + \bP_{\bM}\bA - \bP_{\bM}\bA\bP_{\bQ}$\normalsize, the orthogonal projection of $\bA$ onto  $\cP_{\bX,\bY}$ is given by \small$\bA|_{\cP_{\bX,\bY}} = \bP^{\perp}_{\bM}\bA\bP^{\perp}_{\bQ} + \bB$\normalsize. 
    For any integrable function $\psi$, \small$\E\left[\psi(\bA|_{\cP_{\bX,\bY}})\right] = \E \left[\psi(\bP^{\perp}_{\bM}\bA\bP^{\perp}_{\bQ} + \bB) \right] =  \E \left[\psi(\bP^{\perp}_{\bM}\widetilde{\bA}\bP^{\perp}_{\bQ} + \bB) \right] =\E \left[\psi(\widetilde{\bA}|_{\cP_{\bX,\bY}}) \right]$\normalsize, where the second equality follows from the fact that \small$\bA \stackrel{d}{=}\widetilde{\bA}$\normalsize. 
    Hence, \small$ \bA|_{\cP_{\bX,\bY}} \stackrel{d}{=} \bP^{\perp}_{\bM}\widetilde{\bA}\bP^{\perp}_{\bQ} + \bB$\normalsize, which completes the proof.
\section{Proof of Proposition~\ref{ps5}}\label{ProofPS4}
    Denoting $B = \sqrt{n}A(n) \sim \mu$ yields $\mathbb{E}\big[A^{2+2\alpha}(n)\big] = \mathbb{E}\big[B^{2(1+\alpha)}\big]n^{-(1+\alpha)}= o(n^{-2})$, where the last step uses the facts that $\mathbb{E}\big[B^{2(1+\alpha)}\big]$ is independent of $n$ and $\alpha > 1$.
\section{Proof of Proposition~\ref{ps6}}\label{ProofPS6}
  Denoting \small$X_{N,ij} = A_{ij}(N)v_j(N)$\normalsize, then \small$\{X_{N,ij}$: $1\leq j \leq N \}$\normalsize~is an independent zero-mean triangular array, for $i=1,2,\dots,n$. 
  We claim that \small$\{X_{N,ij}$: $1\leq j \leq N \}$\normalsize~satisfies two conditions in {Proposition}~\ref{ps2}, $\forall i$.
  First, we note that \small$\sum_{j=1}^{N}\E[X^2_{N,ij}] = \sum_{j=1}^{N}\E[A^2_{ij}(N)]v^2_j(N) = \frac{1}{n}\sum_{j=1}^{N}v^2_j(N) = \frac{1}{\rho}\lL\bv^2(N)\lR \rightarrow \frac{s^2_0}{\rho}$ as $n \rightarrow \infty$\normalsize.
  Second, applying Proposition~\ref{ps5} to \small$A_{ij}(N)$\normalsize~gives \small$\E[A^{2+2\alpha}_{ij}(N)] = o(n^{-2})$\normalsize, leading to   \small$\E[X^{2+2\alpha}_{N,ij}] = \E[A^{2+2\alpha}_{ij}(N)] |v_j(N)|^{2+2\alpha}  \leq o(n^{-2})\frac{n}{\rho}\frac{1}{N}\|\bv(N)\|^{2+2\alpha}_{2+2\alpha} = o(n^{-1})$\normalsize, 
 where the last equality holds because \small$\limsupN  \frac{1}{N}\|\bv(N)\|^{2+2\alpha}_{2+2\alpha} < \infty$\normalsize~and $\rho$ is a constant.
  Applying {Proposition}~\ref{ps2} to \small$\{X_{N,ij}: 1\leq j \leq N\}$\normalsize~leads to \small$[\bA(N)\bv(N)]_i = \sum_{j=1}^{N}X_{N,ij} \xRightarrow{\text{d}} \cN\left(0,\frac{s^2_0}{\rho}\right)$\normalsize~as \small$N \rightarrow \infty$\normalsize, $\forall i$.
  Hence, \small$\widehat{\bA(N)\bv(N)} \xRightarrow{d} \mathcal{N}\left(0,\frac{s^2_0}{\rho}\right)$\normalsize~as $N \rightarrow \infty$.
\section{Well-known Lemmas}\label{AppLM}
    \begin{lemma}(Stein's Lemma \cite{stein1972})
       \label{lm2}
        For jointly zero-mean Gaussian random variables $Z_1$ and $Z_2$, and any function $\phi:\R \rightarrow \R$, where $\E[\phi'(Z_2)]$ and $\E[Z_1\phi(Z_2)]$ exist, the following holds $\E[Z_1\phi(Z_2)] = \Cov(Z_1,Z_2) \E [\phi'(Z_2)]$, where $\Cov(Z_1,Z_2)$ is the covariance between $Z_1$ and $Z_2$. 
    \end{lemma}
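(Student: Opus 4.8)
The plan is to reduce the bivariate identity to the classical one-dimensional Stein identity by means of a Gaussian regression (orthogonal) decomposition. First I would write $Z_1 = \beta Z_2 + W$, where $\beta = \Cov(Z_1,Z_2)/\E[Z_2^2]$ and $W = Z_1 - \beta Z_2$. Because $(Z_1,Z_2)$ is jointly zero-mean Gaussian, the pair $(W,Z_2)$ is a linear image of $(Z_1,Z_2)$ and hence also jointly Gaussian; moreover $\E[W Z_2] = \Cov(Z_1,Z_2) - \beta\,\E[Z_2^2] = 0$, and for jointly Gaussian variables uncorrelatedness implies independence, so $W$ is independent of $Z_2$. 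Substituting and using $\E[W]=0$ together with $W \perp Z_2$, the cross term drops and I obtain $\E[Z_1\phi(Z_2)] = \beta\,\E[Z_2\phi(Z_2)]$.

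The second step is to establish the scalar identity $\E[Z_2\phi(Z_2)] = \E[Z_2^2]\,\E[\phi'(Z_2)]$. Writing $\sigma^2 = \E[Z_2^2]$ and letting $p(z) = (2\pi\sigma^2)^{-1/2}\exp(-z^2/2\sigma^2)$ be the density of $Z_2$, I would exploit the defining differential relation $p'(z) = -(z/\sigma^2)\,p(z)$, equivalently $z\,p(z) = -\sigma^2 p'(z)$. Then $\E[Z_2\phi(Z_2)] = \int_{\R} z\phi(z)p(z)\,dz = -\sigma^2\int_{\R}\phi(z)p'(z)\,dz$, and integrating by parts produces $\sigma^2\int_{\R}\phi'(z)p(z)\,dz = \sigma^2\,\E[\phi'(Z_2)]$. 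Combining the two steps gives $\E[Z_1\phi(Z_2)] = \beta\sigma^2\,\E[\phi'(Z_2)] = \Cov(Z_1,Z_2)\,\E[\phi'(Z_2)]$, since $\beta\sigma^2 = \Cov(Z_1,Z_2)$ by construction.

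The main obstacle is justifying the integration by parts: I must verify that the boundary terms $\phi(z)p(z)$ vanish as $z\to\pm\infty$ and that every integral is well-defined. Here I would lean on the standing hypotheses that $\E[\phi'(Z_2)]$ and $\E[Z_1\phi(Z_2)]$ exist, which guarantee the integrands are absolutely integrable against the Gaussian measure; combined with the super-exponential decay of $p(z)$ (which dominates any controlled $\phi$ of the type in \eqref{eq:controlled1}, whose growth is at most $\exp(c_2\|\cdot\|_2^{\lambda})$ with $\lambda<2$), this forces $\phi(z)p(z)\to 0$ at infinity. To discharge this cleanly I would integrate by parts on a truncated interval $[-R,R]$ and pass to the limit $R\to\infty$ via dominated convergence, the dominating function being furnished by the assumed integrability. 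This limiting argument is the only delicate point; the algebraic reduction in the first two paragraphs is entirely routine.
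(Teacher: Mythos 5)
Your proof is correct, but there is nothing in the paper to compare it against: Stein's lemma is quoted from \cite{stein1972} as a known result and used as a black box (e.g., in Step~2a to rewrite $\lim_{n}\lL\bb^0,\bm^0\lR$ as $\sigma_0^2\,\E[g_0'(\sigma_0\widetilde{Z},W)]$), with no proof supplied. Your argument --- the Gaussian regression decomposition $Z_1=\beta Z_2+W$ with $W$ independent of $Z_2$, followed by the one-dimensional integration-by-parts identity $z\,p(z)=-\sigma^2 p'(z)$ --- is the standard derivation and is sound. Two small points are worth tightening. First, the decomposition presupposes $\E[Z_2^2]>0$; the degenerate case should be dispatched separately (there $\Cov(Z_1,Z_2)=0$ and $\phi(Z_2)$ is a.s.\ constant, so both sides vanish). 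Second, splitting $\E[(\beta Z_2+W)\phi(Z_2)]$ into two expectations and factoring $\E[W\phi(Z_2)]=\E[W]\,\E[\phi(Z_2)]$ tacitly requires $\E[|\phi(Z_2)|]<\infty$ and $\E[|Z_2\phi(Z_2)|]<\infty$, which do not follow literally from the two hypotheses as stated; in the paper's setting this is harmless, since $\phi$ is applied to controlled functions, which are integrable in $\cL^1$ and $\cL^2$ against Gaussian measure --- the same fact that supplies the decay needed to annihilate the boundary term $\phi(z)p(z)$ in your truncate-and-pass-to-the-limit integration by parts.
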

    \begin{lemma}\label{lm3}(Strong Law of Large Number \cite{Hu1997}) 
    Let $\{X_{n,m}:1\leq m \leq n\}$ be a triangular array of random variables with $(X_{n,1},X_{n,2},\dots,X_{n,n})$ mutually independent with zero-mean for each $n$ and $\frac{1}{n}\sum_{m=1}^{n}\E[|X_{n,m}|^{2+\kappa}] \leq cn^{\kappa/2}$ for some $0 < \kappa <1$ and $c <\infty$. Then $\limn \frac{1}{n}\sum_{m=1}^{n}X_{n,m} \das 0$.
    \end{lemma}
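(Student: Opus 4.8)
The plan is to prove the stronger statement of \emph{complete convergence}, namely that $\sum_{n=1}^{\infty}\Pr\!\left(\big|\tfrac{1}{n}\sum_{m=1}^{n}X_{n,m}\big|>\epsilon\right)<\infty$ for every $\epsilon>0$. Writing $S_n=\sum_{m=1}^{n}X_{n,m}$, the first Borel--Cantelli lemma then forces the event $\{|S_n/n|>\epsilon\}$ to occur for only finitely many $n$ almost surely; intersecting over the choices $\epsilon=1/k$ gives $\tfrac{1}{n}S_n\das 0$. The crucial point is that this implication needs no independence \emph{across} rows, which is essential because the array is only rowwise independent. I would first note that the naive route fails: bounding $\E|S_n/n|^{2+\kappa}$ by a moment inequality and applying Markov yields a tail only of the critical order $n^{-1}$, which is not summable. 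Summability must instead be extracted by truncation, exploiting that large entries are rare.

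Accordingly, I would truncate each entry at level $n$, setting $X'_{n,m}=X_{n,m}\I(|X_{n,m}|\le n)$ and $S'_n=\sum_{m=1}^{n}X'_{n,m}$. Two reductions follow from the hypothesis $\sum_{m=1}^{n}\E|X_{n,m}|^{2+\kappa}\le cn^{1+\kappa/2}$. First, a union bound gives $\Pr(S_n\ne S'_n)\le\sum_m\Pr(|X_{n,m}|>n)\le n^{-(2+\kappa)}\sum_m\E|X_{n,m}|^{2+\kappa}\le c\,n^{-1-\kappa/2}$, summable in $n$, so it suffices to control $S'_n$. Second, since $\E X_{n,m}=0$, the centering obeys $|\E X'_{n,m}|=|\E[X_{n,m}\I(|X_{n,m}|>n)]|\le n^{-1-\kappa}\E|X_{n,m}|^{2+\kappa}$, whence $|\E S'_n|\le c\,n^{-\kappa/2}\to 0$; thus for large $n$ one has $|\E S'_n|\le\epsilon n/2$ and the task reduces to bounding $\Pr(|S'_n-\E S'_n|>\epsilon n/2)$.

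For this last bound I would expand the fourth central moment of the independent, mean-zero summands $Y_m:=X'_{n,m}-\E X'_{n,m}$. Independence annihilates the odd cross terms, giving the elementary identity $\E|S'_n-\E S'_n|^{4}=\sum_m\E Y_m^{4}+3\sum_{m\ne m'}\E Y_m^{2}\,\E Y_{m'}^{2}\le\sum_m\E Y_m^{4}+3\big(\sum_m\E Y_m^{2}\big)^{2}$, so no external Rosenthal inequality is needed. The variance piece is handled via Lyapunov, $\E Y_m^{2}\le\E X_{n,m}^{2}\le(\E|X_{n,m}|^{2+\kappa})^{2/(2+\kappa)}$, followed by H\"older, which yields $\sum_m\E Y_m^{2}\le C\,n^{1+\kappa/(2+\kappa)}$; after dividing by $(\epsilon n/2)^{4}$ this contributes order $n^{-4/(2+\kappa)}$. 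The fourth-moment piece uses $X^{4}\I(|X|\le n)\le n^{2-\kappa}|X|^{2+\kappa}$, so $\sum_m\E Y_m^{4}\le C\,n^{2-\kappa}\sum_m\E|X_{n,m}|^{2+\kappa}\le C\,n^{3-\kappa/2}$, contributing order $n^{-1-\kappa/2}$. Both exponents, $4/(2+\kappa)$ and $1+\kappa/2$, exceed $1$ for every $0<\kappa<1$, so Markov's inequality applied to the fourth moment makes $\sum_n\Pr(|S'_n-\E S'_n|>\epsilon n/2)<\infty$, and combining with the truncation bound completes the argument.

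The main obstacle is exactly the non-summability encountered on the direct path: with only a $(2+\kappa)$-th moment the untruncated tail sits precisely at the critical order $n^{-1}$, so the entire proof hinges on the truncation at the dimension-dependent level $n$ and on verifying that the centered truncated sum produces tail exponents \emph{strictly} above $1$ uniformly in $\kappa$. The delicate bookkeeping is to confirm that the variance-term exponent $4/(2+\kappa)$ does not degrade to $1$ as $\kappa\uparrow 1$ (it stays $\ge 4/3$) and that the Lyapunov/H\"older step correctly generates the fractional power $n^{\kappa/(2+\kappa)}$ in $\sum_m\E Y_m^{2}$; once these are in place the conclusion $\tfrac1n S_n\das 0$ is immediate.
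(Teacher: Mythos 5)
Your proof is correct, but there is nothing in the paper to compare it against: the paper states this lemma as a quoted, well-known result with a citation to Hu (1997) and gives no proof of its own, so your argument is a genuinely self-contained addition rather than a variant of an internal proof. Your route --- complete convergence via truncation at level $n$ plus a fourth-moment bound and Borel--Cantelli --- is essentially the standard scheme behind strong laws for rowwise-independent triangular arrays, and every step checks out: the hypothesis gives $\sum_{m=1}^{n}\E|X_{n,m}|^{2+\kappa}\leq cn^{1+\kappa/2}$, so the truncation-mismatch probability is $O(n^{-1-\kappa/2})$ and the centering drift satisfies $|\E S'_n|\leq cn^{-\kappa/2}$, both as you claim; your observation that the untruncated Rosenthal--Markov route stalls at the critical non-summable order $n^{-1}$ is accurate and correctly motivates the truncation; the elementary expansion $\E|S'_n-\E S'_n|^4\leq \sum_m \E Y_m^4+3\bigl(\sum_m \E Y_m^2\bigr)^2$ is legitimate because the truncated variables are bounded (all moments exist) and odd cross-moments vanish by rowwise independence and centering; and the exponent arithmetic is right, since $2+2\kappa/(2+\kappa)-4=-4/(2+\kappa)\leq -4/3$ for the variance piece and $(2-\kappa)+(1+\kappa/2)-4=-1-\kappa/2$ for the fourth-moment piece, both strictly below $-1$ uniformly in $0<\kappa<1$. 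Two minor bookkeeping points you implicitly use and should make explicit in a written version: $\E Y_m^4\leq 2^3\bigl(\E (X'_{n,m})^4+|\E X'_{n,m}|^4\bigr)\leq 2^4\,\E (X'_{n,m})^4$ by convexity and Jensen, which only changes constants; and the final intersection over $\epsilon=1/k$ requires only countably many applications of Borel--Cantelli, which, as you note, needs no independence across rows --- the feature that makes complete convergence the right notion here. What your approach buys over the paper's citation is a fully elementary proof (no Rosenthal or exponential inequalities), at the cost of about a page of computation.
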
    
    \begin{lemma}
    \label{lm4} (Holder's inequality \cite{POZNYAK2009}) For random variables $X$ and $Y$, $\E[|X+Y|^r] \leq c_r(\E[|X|^r] + \E[|Y|^r]),$ where $c_r$ = 1 if $0 < r \leq 1$ and $c_r = 2^{r-1}$ otherwise. In particular, the inequality becomes  $\E[|X+y|^r] \leq c_r(\E[|X|^r] + |y|^r])$ when $Y = y$ being a constant. 
    \end{lemma}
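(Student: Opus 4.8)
The plan is to reduce this probabilistic statement to an elementary pointwise inequality for nonnegative reals and then integrate. First I would apply the triangle inequality $|X+Y| \le |X| + |Y|$ together with the monotonicity of $t \mapsto t^r$ on $[0,\infty)$ to obtain $|X+Y|^r \le (|X|+|Y|)^r$ pointwise. It then suffices to establish the deterministic scalar bound $(a+b)^r \le c_r(a^r + b^r)$ for all $a,b \ge 0$ with $c_r$ as stated; applying it with $a = |X|$, $b = |Y|$ and taking $\E[\cdot]$ of both sides yields the claim (with all expectations permitted to be $+\infty$, in which case the bound holds vacuously).

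For the scalar bound I would split on the two regimes of $r$. When $0 < r \le 1$, the map $t \mapsto t^r$ is concave and subadditive on $[0,\infty)$; concretely, $x^r \ge x$ for $x \in [0,1]$, so writing $s = a/(a+b)$ and $1-s = b/(a+b)$ (the case $a+b = 0$ being trivial) gives $s^r + (1-s)^r \ge s + (1-s) = 1$, and multiplying through by $(a+b)^r$ yields $(a+b)^r \le a^r + b^r$, i.e.\ $c_r = 1$. When $r > 1$, the map $t \mapsto t^r$ is convex on $[0,\infty)$, so Jensen's inequality applied to the points $a,b$ with equal weights gives $\big(\tfrac{a+b}{2}\big)^r \le \tfrac{1}{2}(a^r + b^r)$, which rearranges to $(a+b)^r \le 2^{r-1}(a^r + b^r)$, i.e.\ $c_r = 2^{r-1}$.

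The special case asserted at the end of the lemma then follows immediately: taking $Y = y$ a deterministic constant forces $\E[|Y|^r] = |y|^r$, so the general bound specializes to $\E[|X+y|^r] \le c_r(\E[|X|^r] + |y|^r)$.

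I do not expect any genuine obstacle here; the result is elementary. The only point requiring care is keeping the two regimes of $r$ separate, since the driving mechanism differs (subadditivity of $t \mapsto t^r$ for $r \le 1$ versus its convexity for $r > 1$) and the constant $c_r$ changes accordingly. Minor degenerate cases, namely $a+b = 0$ in the first regime and infinite moments in either, are handled trivially and should be noted in passing.
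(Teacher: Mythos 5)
Your proof is correct and complete: the reduction to the pointwise scalar bound $(a+b)^r \le c_r(a^r+b^r)$, handled by subadditivity of $t \mapsto t^r$ for $0 < r \le 1$ and by convexity (Jensen at two equally weighted points) for $r > 1$, followed by monotonicity and linearity of expectation, is the standard argument for this inequality, and your treatment of the degenerate cases ($a+b=0$, infinite moments) and of the constant-$Y$ specialization is fine. The paper itself gives no proof --- it states the lemma as a known result with a citation to \cite{POZNYAK2009} --- so there is nothing to compare against; your argument is precisely the textbook one the citation points to. One cosmetic remark: despite the paper's labeling, this is conventionally called the $c_r$-inequality rather than H\"older's inequality, which is a different statement.
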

    \begin{lemma} 
    \label{lm5}(Lyapunov's inequality \cite{POZNYAK2009})
    Suppose a random variable $X$ and a constant $\kappa$ with $0 < \kappa <1$, then  
    $|\E[X]|^{2+\kappa} \leq \E[|X|^{2+\kappa}].$
    \end{lemma}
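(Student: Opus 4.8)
The plan is to view the stated bound as the instance of Jensen's inequality obtained from the convex map $\phi(t) = |t|^{2+\kappa}$. First I would record that, because $0 < \kappa < 1$, the exponent $p := 2+\kappa$ lies in $(2,3)$ and in particular satisfies $p \geq 1$; hence $\phi$ is convex on all of $\R$. This convexity is the only analytic fact that needs checking, and I would dispatch it by noting that $\phi''(t) = p(p-1)|t|^{p-2} \geq 0$ for $t \neq 0$ while $\phi$ is continuous at the origin, so $\phi$ is globally convex.

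Next I would reduce to the nontrivial regime. If $\E[|X|^{2+\kappa}] = \infty$ the inequality holds vacuously, so it suffices to assume $X \in \cL^{2+\kappa}$; on a probability space this forces $\E[|X|] < \infty$, so $\E[X]$ is finite and every quantity in the statement is well defined. With this in hand, Jensen's inequality $\phi(\E[X]) \leq \E[\phi(X)]$ applied to $\phi(t) = |t|^{2+\kappa}$ delivers $|\E[X]|^{2+\kappa} \leq \E[|X|^{2+\kappa}]$ directly, which is the claim.

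Should a self-contained route be preferred over a bare appeal to Jensen, I would instead chain two elementary steps aligned with the tools already in the excerpt. The triangle inequality for the expectation gives $|\E[X]| \leq \E[|X|]$; raising both sides to the power $p = 2+\kappa \geq 1$ and then invoking the scalar convexity $(\E[|X|])^{p} \leq \E[|X|^{p}]$ — equivalently, the monotonicity of $L^{r}$ norms compared at indices $1$ and $p$ — yields $|\E[X]|^{2+\kappa} \leq (\E[|X|])^{2+\kappa} \leq \E[|X|^{2+\kappa}]$. Either way there is no genuine obstacle here; the single point requiring any care is the convexity of $|t|^{p}$ for the fractional exponent $p = 2+\kappa$, which is settled as above, after which the inequality follows immediately.
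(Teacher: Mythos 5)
Your proof is correct. The paper itself does not prove Lemma~\ref{lm5}: it is stated in the appendix of well-known results with only a citation to \cite{POZNYAK2009}, so there is no in-paper argument to compare against; your Jensen-based derivation (convexity of $t \mapsto |t|^{2+\kappa}$, which holds since $2+\kappa > 1$, applied after reducing to the case $\E[|X|^{2+\kappa}] < \infty$) is precisely the standard textbook proof that the citation stands in for, and your alternative chain $|\E[X]| \leq \E[|X|]$ followed by $(\E[|X|])^{2+\kappa} \leq \E[|X|^{2+\kappa}]$ is just Jensen applied to $|X|$ rather than $X$, so both routes are sound.

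One small point of care: in your reduction step you justify $\E[|X|] < \infty$ from $X \in \cL^{2+\kappa}$ by appealing to the monotonicity of $L^r$ norms on a probability space, but that monotonicity is itself the family of inequalities being proved (Lyapunov's inequality), so as written this is mildly circular. The fix is elementary and worth stating: the pointwise bound $|x| \leq 1 + |x|^{2+\kappa}$ gives $\E[|X|] \leq 1 + \E[|X|^{2+\kappa}] < \infty$ directly, after which $\E[X]$ is well defined and Jensen applies without any circular dependence.
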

\begin{proposition}\label{ps3}
Suppose the \sloppy $\bP_{\bM(n)}$ $=$ $\big(\frac{1}{\sqrt{n}} \bV(n)\big)\big(\frac{1}{\sqrt{n}}\bV(n)\big)^*$, where $\bM(n) \in \mathbb{R}^{n \times t}$ $(t \leq n)$, $t$ is a fixed constant, and $\bV(n)$$=$$[\bv_1(n), \bv_2(n), \dots, \bv_t(n)]$ $\in \mathbb{R}^{n \times t}$ is an orthogonal basis of $\bM(n)$ such that $\bV^*(n) \bV(n) = n \bI$. 
If we let $\ba(n) \in \mathbb{R}^{n \times 1}$ be a random vector with independent entries, which have zero mean and finite variance $\sigma_a^2$, then $\limn \bP_{\bM}(n)\ba(n) \das \mathbf{0}_n$, where $\mathbf{0}_n$ is the $n \times 1$ all-zero vector. 
\end{proposition}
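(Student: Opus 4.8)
The plan is to reduce the vector statement to the vanishing of a single scalar---the normalized squared norm of the projected vector---and then to control that scalar through a second-moment computation, upgrading to the almost-sure conclusion by a Borel--Cantelli argument. First I would make the projector explicit. Since $\bV^*\bV = n\bI$, the orthogonal projection onto the column space is $\bP_{\bM} = \frac{1}{n}\bV\bV^*$, so that
\begin{equation}
\bP_{\bM}\ba = \frac{1}{n}\bV\bV^*\ba = \sum_{k=1}^{t}\lL \bv_k,\ba\lR \bv_k .
\end{equation}
Because $\bP_{\bM}$ is symmetric and idempotent, the normalized squared norm collapses to
\begin{equation}
\frac{1}{n}\|\bP_{\bM}\ba\|^2 = \frac{1}{n}\ba^*\bP_{\bM}\ba = \frac{1}{n^2}\sum_{k=1}^{t}S_k^2, \qquad S_k \triangleq \bv_k^*\ba .
\end{equation}
Since $t$ is a fixed constant, it suffices to show $\frac{1}{n^2}S_k^2 \das 0$ for each $k$. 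The empirical mean of $\bP_{\bM}\ba$ obeys $|\lL \bP_{\bM}\ba\lR| \le (\frac{1}{n}\|\bP_{\bM}\ba\|^2)^{1/2}$ by Cauchy--Schwarz, and its empirical second moment is exactly $\frac{1}{n}\|\bP_{\bM}\ba\|^2$, so both vanish once the scalar does; this is precisely the meaning of $\bP_{\bM}\ba \das \mathbf{0}_n$.

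Next I would compute the moments of $S_k$. Using the independence and zero mean of the entries of $\ba$ together with $\|\bv_k\|^2 = n$, all cross terms vanish and $\E[S_k^2] = \sigma_a^2\|\bv_k\|^2 = n\sigma_a^2$, whence $\E[\frac{1}{n^2}S_k^2] = \sigma_a^2/n \to 0$. This already delivers $L^1$, hence in-probability, convergence.

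To obtain the almost-sure statement I would pass to the fourth moment. Expanding $\E[S_k^4]$ and retaining only the fully paired terms gives $\E[S_k^4] = \sum_i v_{k,i}^4\,\E[a_i^4] + 3\sigma_a^4\bigl(n^2 - \sum_i v_{k,i}^4\bigr)$, and since $\sum_i v_{k,i}^4 \le (\sum_i v_{k,i}^2)^2 = n^2$, a uniform fourth-moment bound on the entries of $\ba$ yields $\E[S_k^4] = O(n^2)$. Markov's inequality then gives $\P(|S_k|/n > \epsilon) \le \E[S_k^4]/(\epsilon^4 n^4) = O(n^{-2})$, which is summable, so Borel--Cantelli forces $S_k/n \das 0$; summing the $t$ directions finishes the argument.

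The main obstacle is the upgrade from convergence in mean to almost-sure convergence. Under the stated finite-variance hypothesis alone, $\{v_{k,i}a_i\}$ is a triangular array whose rows (both weights and laws) change with $n$, and almost-sure convergence of $S_k/n$ can genuinely fail for such arrays with only second moments; the clean route therefore requires a $(2+\delta)$- or fourth-moment bound so that the deviation probabilities become summable. I would point out that in the intended application, where $\ba = \widetilde{\bA}^*\bm^0$ and the coordinates inherit the higher-order integrability of $\bA$ through Proposition~\ref{ps5}, this extra moment control is automatically available, legitimizing the Borel--Cantelli step; in its absence the conclusion should be read in the $L^2$ (in-probability) sense.
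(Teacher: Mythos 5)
Your proof is correct in substance and is in fact more careful than the paper's own argument. The paper's proof factors $\bP_{\bM(n)}\ba(n) = \bV(n)\,\frac{\|\ba(n)\|_2}{\sqrt{n}}\,\big(\frac{1}{\sqrt{n}}\bV^*(n)\big)\widetilde{\ba}(n)$ with $\widetilde{\ba}(n)=\ba(n)/\|\ba(n)\|_2$ and then simply asserts two facts without justification: $\|\ba(n)\|_2/\sqrt{n} \das \sigma_a$ and $\big(\frac{1}{\sqrt{n}}\bV^*(n)\big)\widetilde{\ba}(n) \das \mathbf{0}_t$. Unwinding the normalization, the second assertion is exactly your claim that $S_k/n = \bv_k^*(n)\ba(n)/n \das 0$, so both proofs hinge on the same scalar quantities; yours actually supplies the missing argument (a second-moment computation for the in-probability statement, then a fourth-moment bound plus Borel--Cantelli for the almost-sure upgrade), whereas the paper's does not. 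Your closing caveat is well taken and applies equally to the paper's version: since the weights $v_{k,i}(n)$ and the entry laws of $\ba(n)$ may change with $n$, this is a genuine triangular array, and neither $S_k/n \das 0$ nor $\|\ba(n)\|_2^2/n \das \sigma_a^2$ follows from finite second moments alone; the proposition as stated therefore needs a uniform higher-moment hypothesis (a $2+\delta$ or fourth-moment bound, uniform in $i$ and $n$), which, as you observe, is available in the paper's applications through Proposition~\ref{ps5} and the moment bounds carried along in Theorem~\ref{theorem1}b). The only point to tighten in your write-up is to state explicitly that the fourth-moment bound on the entries is uniform before concluding $\E[S_k^4]=O(n^2)$; with that in place the Borel--Cantelli step is airtight, and your reading of the conclusion as $L^2$/in-probability under the literal hypotheses is the honest one.
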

\begin{proof}
	Denoting $\widetilde{\ba}(n) = \frac{\ba(n)}{\|\ba(n)\|_2}$ yields
    $\bP_{\bM(n)}\ba(n) = \bV(n)\frac{\|\ba(n)\|_2}{\sqrt{n}}\left(\frac{1}{\sqrt{n}}\bV^*(n)\right)\widetilde{\ba}(n)$. The proposition follows from the fact that $\frac{\|\ba(n)\|_2}{\sqrt{n}} \das \sigma_a$ and $\left(\frac{1}{\sqrt{n}}\bV^*(n)\right)\widetilde{\ba}(n) \das \mathbf{0}_n$ as $n \rightarrow \infty$. 
\end{proof}
\section{Proof of SE in \eqref{eq:SE} Theorem~\ref{theorem1}}
 \label{AppendixA} 
Substituting $\phi_b(\bu^t_i) = \left(b^t_i\right)^2$ into \eqref{b:b} gives $\lim_{n \rightarrow \infty} \langle \bb^t, \bb^t\rangle \stackrel{a.s}{=} \E\left[ \sigma^2_t\widetilde{Z}^2_t\right]= \sigma^2_t$. 
Using \eqref{eq:b2} with $t_1 = t_2 = t$ yields $\lim_{N \rightarrow \infty} \frac{1}{\rho} \langle \bq^t, \bq^t \rangle = \sigma^2_t$.
Then substituting $\phi_h(\bv^{t-1}_i) = f_t^2(h^t_i,x_{0i}) = (q^t_i)^2$ into \eqref{b:h} leads to $\lim_{N \rightarrow \infty} \langle \bq^t, \bq^t \rangle \stackrel{a.s}{=} \E\left[f_t^2(\tau_{t-1}Z_{t-1},X_0)\right]$, resulting in $\sigma^2_t = \frac{1}{\rho}\E\left[f_t^2(\tau_{t-1}Z_{t-1},X_0)\right]$. 
Showing the rest half $\tau^2_t = \E \left[g_t^2(\sigma_tZ,W)\right]$ of the SE in \eqref{eq:SE} follows from the exactly same procedure as the above.
Setting $\phi_h(\bv^t_i) = (h^{t+1}_i)^2$ in \eqref{b:h} gives $\lim_{N \rightarrow \infty} \langle \bh^{t+1},\bh^{t+1}\rangle = \tau^2_t$. Using \eqref{eq:h2} with $t_1=t_2=t$ yields $\lim_{N \rightarrow \infty} \langle \bm^t, \bm^t \rangle = \tau^2_t$. 
Applying \eqref{b:b} to $\phi_b(\bu^t_i) = g_t^2(b^t_i,w_i)$ yields $\lim_{N \rightarrow \infty} \langle \bm^t, \bm^t \rangle = \E\left[ g^2_t(\sigma_tZ,W)\right]$. Therefore, $\tau^2_t = \E\left[ g^2_t(\sigma_tZ,W)\right]$, concluding the proof.       
\section{Proof of Theorem \ref{theorem1}: Steps 3 and 4}\label{AppendixC}
     \subsection{Step 3: We show a), b), c), and d) of Theorem~\ref{theorem1} conditioning on $\cF_{t,t} = \{\bb^0,\dots,\bb^{t-1},\bm^0,\dots,\bm^{t-1},\bh^1,\dots,\bh^t,\bq^0,\dots,\bq^t,\bx_0,\bw\}$.}
    a)
    Note that conditioning on $\cF_{t,t}$ is equivalent to conditioning on $\cP_{\bX_t,\bY_t} = \{\bA|\bA^*\bM_t = \bX_t, \bA\bQ_t = \bY_t\}$. 
    Applying Proposition~\ref{ps4} to obtain the conditional distribution 
    $\bA|_{\cF_{t,t}}$ and following the same procedure as in \cite[Lemma 1a]{Bayati11}, the conditional distribution of $\bb^t$ on $\cF_{t,t}$ is expressed as 
    \begin{equation}
    \label{eq:3.1}
        \bb^t|_{\cF_{t,t}} \ddis  \sum_{j=0}^{t-1}\beta_j\mathbf{b}^j + \widetilde{\bA}\bq^{t}_{\perp} - \bP_{\bM_t}\widetilde{\bA}\bq^t_{\perp}+ \mathbf{M}_t\overrightarrow{\bo_t}(1).
    \end{equation}
    By {Proposition}~\ref{ps3} in Appendix~\ref{AppLM}, the third term on the {r.h.s} of \eqref{eq:3.1} converges to $\limn \bP_{\bM_t}\widetilde{\bA}\bq^t_{\perp} \das \mathbf{0}_n$.
    Similar to {Step}~2a), we verify the convergence $\limn \bM_t\overrightarrow{\bo}_t(1)\das \mathbf{0}_n$ by characterizing the expectation and variance of its empirical distribution $\widehat{\bM_t\overrightarrow{\bo}_t(1)}$ as $n \rightarrow \infty$.
    Indeed, $\limn |\lL \bM_t\overrightarrow{\bo}_t(1) \lR| \leq \limn |o(1)|\frac{1}{n}\sum_{i=1}^{n}\left|\sum_{j=0}^{t-1}m^j_i\right|$ $\leq$ $\limn |o(1)|\sum_{j=0}^{t-1}\frac{1}{n}\sum_{i=1}^{n}|m^j_i| \das  0,$ 
    where the last equality holds because applying $\phi_b (b^j_i,w_i) = g_j(b^j_i,w_i)$ to the induction hypothesis of \eqref{b:b}, for $j < t$, leads to $\limn \frac{1}{n}\sum_{i=1}^{n}|m^j_i| \das \E[|g_j(\sigma_j\widetilde{Z}_j,W)|] < \infty$.  
    Hence, $\limn \lL \bM_t\overrightarrow{\bo}_t(1) \lR \das 0$. For the variance of $\widehat{\bM_t\overrightarrow{\bo}_t(1)}$, we get  
    \begin{subequations}
    \label{eq:3.3}
    \beq
    \d4\d4\limn \lL \bM_t \overrightarrow{\bo}_t(1) \lR_2 \d4&=&\d4 \limn \frac{1}{n}[o(1)]^2 \sum_{i=1}^{n}\left( \sum_{j=0}^{t-1}m^j_i \right)^2,\nonumber\\ 
    \d4&\leq&\d4 \limn [o(1)]^2 t \frac{1}{n}\sum_{i=1}^{n} \sum_{j=0}^{t-1}(m^j_i)^2, \label{eq:3.3a}\\
    \d4& = &\d4 \limn [o(1)]^2t\sum_{j=0}^{t-1}\lL\bm^j,\bm^j\lR \das 0, \label{eq:3.3b}
    \eeq
    \end{subequations}
    where \eqref{eq:3.3a} follows from the Cauchy-Schwarz inequality and \eqref{eq:3.3b} holds because $\limn \lL\bm^j,\bm^j \lR \das \E[g^2_j(\sigma_j\widetilde{Z}_j,W)] < \infty$, $\forall j$, which is a direct consequence of the induction hypothesis of \eqref{b:b} with $\phi_b(b^j_i,w_i)=g^2_j(b^j_i,w_i)$.
    Hence, \eqref{eq:3.3b} is equivalent to $\limn \lL \bM_t \overrightarrow{\bo}_t(1) \lR_2 \das 0$. 
    Therefore, $\limn \bM_t \overrightarrow{\bo}_t(1) \das \mathbf{0}_n$, implying 
    \begin{equation}
        \label{eq:3.4}
        \bb^t|_{\cF_{t,t}} \xRightarrow{d} \sum_{j=0}^{t-1}\beta_j\bb^j + \widetilde{\bA}\bq^t_{\perp}.
    \end{equation}
    b) Note that by the induction hypothesis of \eqref{b:h} for  $\phi_h(h^t_i,x_{0i}) = {|f_t(h^t_i,x_{0i})|^{2+2\alpha}}$, we get $\limN \frac{1}{N} \sumN |q^t_i|^{2+2\alpha} \das \E[{|f_t(\tau_{t-1}Z_{t-1},X_0)|^{2+2\alpha}}] < \infty.$ 
    On the other hand, $\sumN |q^t_{\perp i}|^{2+2\alpha} < \sumN |q^t_{i}|^{2+2\alpha}$.
    Thus, we have     
    $\limsupN \frac{1}{N} \sumN |q^t_{\perp i}|^{2+2\alpha} < \infty$, which concludes \eqref{eq:alphaQ}.
    c) For $t_1 < t$ and $t_2=t$, we obtain
    \small\begin{subequations}
    \beq
     \d4\limn \lL\bb^{t_1},\bb^t\lR \d4&\ddis&\d4\d4 \limn \sum_{j=0}^{t-1} \beta_j\lL\bb^{t_1},\bb^j\lR \!+\!\limn  \lL \bb^{t_1},\widetilde{\bA}\bq^t_{\perp}\lR, \label{eq:3.5a}\\
    \d4&\das&\d4  \sum_{j=0}^{t-1} \beta_j\!\limN\!\!\frac{\lL\bq^{t_1},\bq^j\lR}{\rho} \!+\! \limn \frac{\bb^{t_1^*}\widetilde{\bA}\bq^t_{\perp}}{n} \label{eq:3.5b}, 
    \eeq
    \end{subequations}\normalsize 
    where \eqref{eq:3.5a} follows from \eqref{eq:3.4} and \eqref{eq:3.5b} results from the induction hypothesis \eqref{eq:b2} for $t_1 < t$ and $t_2 = j < t$. 
    Now, using {Proposition}~\ref{ps7}, we get $ \frac{\bb^{t_1^*}}{\|\bb^{t_1}\|_2}\widetilde{\bA}\frac{\bq^t_{\perp}}{\|\bq^t_{\perp}\|_2} \ddis \frac{Z}{\sqrt{n}}$, where $Z \sim \cN(0,1)$. Hence, $\frac{\bb^{t_1^*}\widetilde{\bA}\bq^t_{\perp}}{n} \ddis  \frac{\|\bb^{t_1}\|_2}{\sqrt{n}}\frac{\|\bq^t_{\perp}\|_2}{\sqrt{N}}\frac{1}{\sqrt{\rho}}\frac{Z}{\sqrt{n}}$, i.e., 
    \small\begin{equation}
    \frac{\bb^{t_1^*}\widetilde{\bA}\bq^t_{\perp}}{n} \ddis \frac{1}{\sqrt{\rho}} \sqrt{\lL\bb^{t_1},\bb^{t_1} \lR \lL\bq^t_{\perp},\bq^t_{\perp}\lR} \frac{Z}{\sqrt{n}}.\label{eq:3.6}
    \end{equation}\normalsize
    By the induction hypothesis of \eqref{eq:b2}, we have $\limn \lL\bb^{t_1},\bb^{t_1} \lR = \limn \frac{\lL \bq^{t_1},\bq^{t_1} \lR}{\rho} < \infty$. 
    Moreover, the $\lL\bq^t_{\perp},\bq^t_{\perp}\lR$ converges to $\limN \lL\bq^t_{\perp},\bq^t_{\perp}\lR < \limN \lL\bq^t,\bq^t\lR < \infty$ because using the induction hypothesis \eqref{b:h} we have $\lL\bq^t,\bq^t\lR  = \frac{1}{N}\sumN f^2_t(h^t_i,x_{0i}) \das \E[f^2_t(\tau_{t-1}Z_{t-1},X_0)] < \infty$.
    Thus, for $t_1 <t$, 
    \begin{equation}
        \label{eq:3.7}
      \limn  \lL\widetilde{\bA}\bq^t_{\perp},\bb^{t_1}\lR \das 0. 
    \end{equation} Substituting \eqref{eq:3.7} into \eqref{eq:3.5b} gives  $\limn \lL\bb^{t_1},\bb^t\lR \das  \sum_{j=0}^{t-1}\beta_j\limn\frac{\lL\bq^{t_1},\bq^j\lR}{\rho} = \limN \frac{\lL \bq^{t_1}, \bq^t_{||}\lR}{\rho}$ due to \eqref{eq:projection1}, implying 
    \begin{subequations}
    \label{eq:3.8}
    \beq 
    \d4\d4\limn \lL\bb^{t_1},\bb^t\lR  \d4&\das&\d4 \limN \frac{\lL \bq^{t_1}, \bq^t_{||} \lR}{\rho} + \limN \frac{\lL \bq^{t_1}, \bq^t_{\perp} \lR}{\rho} ,\label{eq:3.8b}\\
    \d4&=&\d4 \limN \frac{\lL \bq^{t_1}, \bq^t\lR}{\rho}, \label{eq:3.8c}
    \eeq 
    \end{subequations}
    where \eqref{eq:3.8b} follows from the fact that $\bq^j$ is orthogonal to $\bq^t_{\perp}$, for $j < t$, and \eqref{eq:3.8c} holds due to \eqref{eq:projection2}, 
    concluding \eqref{eq:b2} when $t_1 < t$ and $t_2 = t$. 
    For the case of $t_1 = t_2 =t$, it is similarly given by 
    $\limn \lL \bb^t,\bb^t \lR$ $\ddis$ $ \sum_{ i,j = 0}^{t-1}\beta_i\beta_j \limn  \lL\bb^i,\bb^j \lR +$ $2\sum_{i=0}^{t-1}\beta_i \limn\lL\bb^i, \widetilde{\bA}\bq^t_{\perp}\lR +$ $\limn \lL\widetilde{\bA}\bq^t_{\perp},\widetilde{\bA}\bq^t_{\perp}\lR$ due to \eqref{eq:3.4}. 
    Then, by \eqref{eq:3.7}, the following holds 
    \small\begin{equation}
    \d4 \limn \lL \bb^t,\bb^t \lR \das \sum_{ i,j = 0}^{t-1}\!\beta_i\beta_j \!\limn  \lL\bb^i,\bb^j \lR \!+\! \limn \lL\widetilde{\bA}\bq^t_{\perp},\widetilde{\bA}\bq^t_{\perp}\lR. \label{eq:3.9b}
    \end{equation}\normalsize
   Using {Proposition}~\ref{ps6}, $\widehat{\widetilde{\bA}\bq^t_{\perp}} \xRightarrow{d} \cN\left(0,\limN \frac{\lL\bq^t_{\perp},\bq^t_{\perp}\lR}{\rho}\right)$. 
    Thus, the second moment of $\widehat{\widetilde{\bA}\bq^t_{\perp}}$ is  
    \begin{equation}
    \label{eq:3.9.1}
    \limn \lL \widetilde{\bA}\bq^t_{\perp}, \widetilde{\bA}\bq^t_{\perp}\lR \das  \limN \frac{\lL\bq^t_{\perp},\bq^t_{\perp}\lR}{\rho}.
    \end{equation} 
    Now, incorporating \eqref{eq:3.9.1} in \eqref{eq:3.9b}, $\limn \lL \bb^t,\bb^t \lR \das \sum_{ i,j = 0}^{t-1}\beta_i\beta_j \limn  \lL\bb^i,\bb^j \lR+\limN\frac{\lL\bq^t_{\perp},\bq^t_{\perp}\lR}{\rho}$, resulting in   
    \small\begin{subequations}
    \label{eq:3.10}
    \beq 
    \d4\d4\limn \lL \bb^t,\bb^t \lR\d4&\das&\d4 \!\!\! \sum_{ i,j = 0}^{t-1}\beta_i\beta_j \limN \! \frac{\lL\bq^i,\bq^j \lR}{\rho} \!+\! \limN\!\frac{\lL\bq^t_{\perp},\bq^t_{\perp}\lR}{\rho}, \label{eq:3.10b}\\
    \d4&\das&\d4 \!\!\!\limN \frac{\lL\bq^t_{||},\bq^t_{||}\lR}{\rho} + \limN\frac{\lL\bq^t_{\perp},\bq^t_{\perp}\lR}{\rho}, \nonumber\\ 
    &\das&\d4\!\!\!\limN\frac{\lL\bq^t,\bq^t\lR}{\rho}, \nonumber
    \eeq 
    \end{subequations}\normalsize   
    where \eqref{eq:3.10b} is due to the induction hypothesis \eqref{eq:b2} for $0 \leq t_1 = i, t_2 = j \leq t-1$. 
   This completes the proof of \eqref{eq:b2} at the $t$th iteration.   \\

    d) Defining $\limN \frac{\lL \bq^t_{\perp}, \bq^t_{\perp}\lR}{\rho} \das \gamma^2_t$, we can write by \eqref{eq:3.9.1} that $\widehat{\widetilde{\bA}\bq^t_{\perp}} \xRightarrow{d} \cN(0,\gamma^2_t)$.
    Using \eqref{eq:3.4} in conjunction with the latter, we get 
    \begin{equation}
        \label{eq:3.12}
        b^t_i|_{\cF_{t,t}} \xRightarrow{d} \sum_{j=0}^{t-1}\beta_jb^j_i + \gamma_tZ, \text{ for } i= 1,2,\dots,n,
    \end{equation}
     where $Z \sim \cN(0,1)$. 
     Similar to Step 1d), using \eqref{eq:3.12} $\bu^t_i \xRightarrow{d} \widetilde{\bu}^t_i$, where  
      $\bu^t_i = (b^0_i \dots, b^{t}_i,w_i)$ and $\widetilde{\bu}^t_i = (b^0_i,\dots,b^{t-1}_i,\sum_{j=0}^{t-1}\beta_j b^j_i + \gamma_tZ,w_i)$, $\forall i$.
     To prove \eqref{b:b}, we first claim that $\limn \frac{1}{n} \sumn \phi_b(\widetilde{\bu}^t_i) - \mathbb{E}\!\left[\phi_b(\sigma_0\widetilde{Z}_0,\dots, \sigma_t\widetilde{Z}_t, W)\right] \das 0$. By the triangular inequality, \small$ \bigg| \limn \frac{1}{n} \sumn \phi_b(\widetilde{\bu}^t_i) - \mathbb{E}\!\left[\phi_b(\sigma_0\widetilde{Z}_0,\dots, \sigma_t\widetilde{Z}_t, W)\right] \bigg| \leq X^t_1 + X^t_2$\normalsize, 
     where \small$X^t_1$ $=$ $\bigg| \frac{1}{n}\sumn \left(\phi_b(\widetilde{\bu}^t_i) - \widetilde{\phi}_b(\bu^{t-1}_i)\right)\bigg|$\normalsize, 
    \small$X^t_2$ $=$ $\bigg| \frac{1}{n}\sumn  \widetilde{\phi}_b(\bu^{t-1}_i) - \E[\phi_b(\sigma_0\widetilde{Z}_0,\dots,\sigma_t\widetilde{Z}_t,W)]\bigg|$\normalsize, and $\widetilde{\phi}_b(\bu^{t-1}_i) = \E_{Z}[\phi_b(\widetilde{\bu}^t_i)]$. 
    Similar to {Step}~1d), we verify $\limn X^t_1 \das 0$ and $\limn X^t_2 \das0$. 
   First showing $\limn X^t_1 \das 0$ is of interest.
{   By \eqref{eq:controlled2}, 
   $|\phi_b(\widetilde{\bu}^t_i)| 
   \leq c^t_{1} \exp \Big( c^t_2\Big( \sum_{j=0}^{t-1}|b^t_i|^{\lambda} + \Big|\sum_{j=0}^{t-1} \beta_j b^j_i +\gamma_t Z \Big|^{\lambda}+ |w_i|^{\lambda} \Big) \Big),$  where $c^t_1>0$, $c^t_2>0$, and $1 \leq \lambda <2$ are constants. 
   Using the inequality $\|\bx\|^{\lambda}_1 \leq (t+1)^{\lambda-1}\|\bx\|^{\lambda}_{\lambda}$ for $\bx \in \R^{(t+1) \times 1}$, we get 
   $|\phi_b(\widetilde{\bu}^t_i)| \leq c^t_{1} \exp \Big( c^t_2  \Big( \sum_{j=0}^{t-1}(1+(t+1)^{\lambda-1}|\beta_j|^{\lambda})|b^t_i|^{\lambda} + (t+1)^{\lambda-1} |\gamma_t|^{\lambda} |Z|^{\lambda} + |w_i|^{\lambda} \Big) \Big).$
   Hence, $\E_{Z}[|\phi_b(\widetilde{\bu}^t_i)|^{2+\kappa}] \leq c^t_{3} \exp\left(c^t_4\left(\sum_{j=0}^{t-1}|b^j_i|^{\lambda} + |w_i|^{\lambda}\right)\right)\E_Z\left[\exp(c^t_4|Z|^{\lambda})\right]$, where $0< \kappa <1$, $c^t_3 = (c^t_1)^{2+\kappa}$, and $c^t_4 = (2+\kappa) c^t_2 \max \Big\{ 1+(t+1)^{\lambda-1}|\beta_0|^{\lambda},\dots,1+(t+1)^{\lambda-1}|\beta_{t-1}|^{\lambda},(t+1)^{\lambda-1}|\gamma_t|^{\lambda} \Big\}$,  resulting in} 
\begin{equation}
    \E_{Z}[|\phi_b(\widetilde{\bu}^t_i)|^{2+\kappa}] \leq c^t_5\exp\Big(c^t_4\Big(\sum_{j=0}^{t-1}|\beta_j b^j_i|^{\lambda} + |w_i|^{\lambda}\Big)\Big), \label{eq:3.14a}
\end{equation} 
and $c^t_5 = c^t_3\E_Z\left[\exp(c^t_4\gamma^{\lambda}_t|Z|^{\lambda})\right]$ is constant. 
We define $X^t_{n,i} = \phi_b(\widetilde{\bu}^t_i) - \widetilde{\phi}_b(\bu^{t-1}_i) = \phi_b(\widetilde{\bu}^t_i) - \E_{Z}[\phi_b(\widetilde{\bu}^t_i)]$ such that $X_1^t = |\frac{1}{n}\sumn X^t_{n,i}|$. 
To prove $\limn X^t_1 \das 0$, we show that $\{X^t_{n,i}\}_{i=1}^{n}$ satisfy {Lemma}~\ref{lm3} in Appendix~\ref{AppLM}. 
Indeed, $\E_Z[|X^t_{n,i}|^{2+\kappa}]$ is upper bouned as follows,  
    \small\begin{subequations}
    \label{eq:3.15}
    \beq 
    \d4\d4\d4\E_Z[|X^t_{n,i}|^{2+\kappa}] 
    \d4\!&\leq&\d4\!  2^{1+\kappa}\!\left(\E_Z\left[|\phi_b(\widetilde{\bu}^t_i)|^{2+\kappa}\right] \!+\! \left|\E_Z[\phi_b(\widetilde{\bu}^t_i)]\right|^{2+\kappa} \!\!\right), \label{eq:3.15b}\\  \d4\!&\leq&\d4\!2^{2+\kappa}\E_Z\left[|\phi_b(\widetilde{\bu}^t_i)|^{2+\kappa}\right], \label{eq:3.15c}\\ 
    \d4\!&\leq&\d4\! c^t_6\exp\Big(c^t_4\Big(\sum_{j=0}^{t-1}|\beta_j b^j_i|^{\lambda} + |w_i|^{\lambda}\Big)\Big),\label{eq:3.15e}
    \eeq 
    \end{subequations}\normalsize 
    where  \eqref{eq:3.15b} follows from {Lemma}~\ref{lm4} (Holder's inequality) in Appendix~\ref{AppLM}, \eqref{eq:3.15c} follows from {Lemma}~\ref{lm5} (Lyapunov's inequality) in Appendix~\ref{AppLM}, and \eqref{eq:3.15e} follows from \eqref{eq:3.14a} with $c^t_6 = 2^{2+\kappa}c^t_5$. 
    We denote the last term of \eqref{eq:3.15e} as \small$\psi_b(\bu^{t-1}_i) =  c^t_6\exp\left(c^t_4\left(\sum_{j=0}^{t-1}|\beta_j b^j_i|^{\lambda} + |w_i|^{\lambda}\right)\right)$\normalsize. Then, $\psi_b(\bu^{t-1}_i)$ is a controlled function. 
    From \eqref{eq:3.15e}, we get, for $n$ is sufficiently large,  
    \small\begin{subequations}
    \label{eq:3.16}
    \beq 
    \d4\d4\frac{1}{n}\sumn \E_{Z}[|X^t_{n,i}|^{2+\kappa}] \d4&\leq&\d4 \frac{1}{n}\sumn \psi_b(\bu^{t-1}_i),\nonumber\\
    \d4&\das&\d4 \E[\psi_b(\sigma_0\widetilde{Z}_0,\dots,\sigma_{t-1}\widetilde{Z}_{t-1},W)],\label{eq:3.16b}\\
    \d4&<&\d4cn^{\kappa/2},\label{eq:3.16c}
    \eeq 
    \end{subequations}\normalsize
     where $c$ is a positive constant,
     \eqref{eq:3.16b} is due to the induction hypothesis \eqref{b:b},  
     and \eqref{eq:3.16c} holds because $\E[\psi_b(\sigma_0\widetilde{Z}_0,\dots,\sigma_{t-1}\widetilde{Z}_{t-1},W)] = c^t_7 < \infty$ and there exists $n_t$, a positive constant, such that $ c^t_7 < cn^{\kappa/2}$ for $n > n_t$. 
    By {Lemma}~\ref{lm3} in Appendix~\ref{AppLM}, we get $\limn \frac{1}{n}\sumn X^t_{n,i} \das 0$, implying 
    \begin{equation}
        \label{eq:3.17}
        \frac{1}{n}\sumn \left(\phi_b(\widetilde{\bu}^t_i) - \widetilde{\phi}_b(\bu^{t-1}_i)\right) \das 0, 
    \end{equation}
    which proving $\limn X^t_1 \das 0$.

     Now, showing $\limn X^t_2 \das 0$ is of interest. 
     By the induction hypothesis \eqref{b:b}, $\limn \frac{1}{n} \sumn \widetilde{\phi}_b(\bu^{t-1}_i) \das \E[\widetilde{\phi}_b(\sigma_0\widetilde{Z}_0,\dots,\sigma_{t-1}\widetilde{Z}_{t-1},W)]$, resulting in  
    \small\begin{multline}\label{eq:3.18c}
    \d4\limn \frac{1}{n} \sumn \widetilde{\phi}_b(\bu^{t-1}_i) \\  
    \begin{aligned}
    &=\E\left[ \E_Z\left[\phi_b(\sigma_0\widetilde{Z}_0,\dots,\sigma_{t-1}\widetilde{Z}_{t-1},\sum_{j=0}^{t-1}\beta_j\sigma_j\widetilde{Z}_j + \gamma_tZ,W)\right]\right],\\
    &=\E\left[\phi_b(\sigma_0\widetilde{Z}_0,\dots,\sigma_{t-1}\widetilde{Z}_{t-1},\sum_{j=0}^{t-1}\beta_j\sigma_j\widetilde{Z}_j + \gamma_tZ,W)\right],
    \end{aligned}
    \end{multline}\normalsize
    where \eqref{eq:3.18c} follows from the substitution $\widetilde{\phi}_b(\bu^{t-1}_i) = \E_{Z}[\phi_b(\widetilde{\bu}^t_i)]$. 
    Therefore, showing $\limn X^t_2 \das 0$ is equivalent to proving $\sum_{j=0}^{t-1}\beta_j\sigma_j\widetilde{Z}_j + \gamma_tZ = \sigma_t\widetilde{Z}_t$, where $\widetilde{Z}_t \sim \cN(0,1)$ and $\sigma_t$ is defined in \eqref{eq:SE}. 
    In particular, for $\phi_b(\bu^t_i) = (b^t_i)^2$, we get $\phi_b(\widetilde{\bu}^t_i) = \left(\sum_{j=0}^{t-1}\beta_jb^j_i + \gamma_tZ \right)^2$ because $\widetilde{\bu}^t_i = (b^0_i,\dots,b^{t-1}_i,\sum_{j=0}^{t-1}\beta_j b^j_i + \gamma_tZ,w_i)$. 
    Combining \eqref{eq:3.17} and \eqref{eq:3.18c}, 
    \small\begin{multline}
        \label{eq:3.19}
        \limn \lL\bb^t,\bb^t \lR = \limn \frac{1}{n}\sumn \phi_b(\bu^t_i) \\ {\ddis} \limn \frac{1}{n}\sumn \phi_b(\widetilde{\bu}^t_i)
        \das   \E\Big[\Big(\sum_{j=0}^{t-1}\beta_j\sigma_j\widetilde{Z}_j + \gamma_tZ\Big)^2\Big]. 
    \end{multline}\normalsize
    Using \eqref{eq:b2}, $\limn \lL\bb^t,\bb^t \lR \das \limN \frac{\lL\bq^t,\bq^t \lR}{\rho} = \sigma_t^2$, where the last equality holds because by the induction hypothesis \eqref{b:h} for $\phi_h(\bv^{t-1}_i) = f^2_t(h^{t}_i,x_{0i})$ in \eqref{b:h}, $\frac{1}{\rho}\limN \lL \bq^t,\bq^t \lR \das \frac{1}{\rho}\E[f^2_t(\tau_{t-1}Z,X_0)] = \sigma^2_t$. 
    Hence, $\E\left[\left(\sum_{j=0}^{t-1}\beta_j\sigma_j\widetilde{Z}_j + \gamma_tZ\right)^2\right] \das \sigma^2_t$, implying $\sum_{j=0}^{t-1}\beta_j\sigma_j\widetilde{Z}_j + \gamma_tZ = \sigma_t\widetilde{Z}_t$ due to \eqref{eq:3.19}, verifying that $\limn X^t_2 \das 0$, which completes the proof of \eqref{b:b}. 
      \subsection{Step 4: We show a), b), c), and d) of Theorem~\ref{theorem1} conditioning on $\cF_{t+1,t} = \{\bb^0,\dots,\bb^{t},\bm^0,\dots,\bm^{t},\bh^1,\dots,\bh^t,\bq^0,\dots,\bq^t,\bx_0,\bw\}$.} 
    The proof of Step 4 is similar to the proof of Step 3. Thus, we only present the features that are unique in Step 4. 
    a) Similar to {Step} 3a), using Proposition~\ref{ps4} to characterize $\bA|_{\cF_{t+1,t}}$ and following the same procedure as in \cite[Lemma 1a]{Bayati11}, the $\bh^{t+1}|_{\cF_{t+1,t}}$ is  
    \small\begin{equation}
        \label{eq:4.1}
        \bh^{t+1}|_{\cF_{t+1,t}} \ddis \sum_{j=0}^{t-1}\zeta_j\bh^{j+1} +\widetilde{\bA}^*\bm^t_{\perp} - \bP_{\bQ_{t+1}}\widetilde{\bA}^*\bm^t_{\perp} + \bQ_{t}\overrightarrow{\bo}_t(1).
    \end{equation}\normalsize
    By {Proposition}~\ref{ps3} in Appendix~\ref{AppLM}, $\limN \bP_{\bQ_{t+1}}\widetilde{\bA}^*\bm^t_{\perp} \das \mathbf{0}_N$. 
    Similar to Step 3a), we verify that $\limN \bQ_{t}\overrightarrow{\bo}_{t}(1) \das \mathbf{0}_N$ by  characterizing (i) the expectation of the empirical distribution $\widehat{\bQ_{t}\overrightarrow{\bo}_{t}(1)}$ is bounded as 
    \small$\limN |\lL \bQ_{t}\overrightarrow{\bo}_t(1) \lR| \leq \limN |o(1)| \sum_{j=0}^{t-1} \frac{1}{N}\sumN |q^j_i|$ $\das 0$\normalsize~and (ii) the empirical variance of $\widehat{\bQ_{t}\overrightarrow{\bo}_{t}(1)}$ is bounded an converges to  \small$\limN \lL \bQ_{t}\overrightarrow{\bo}_{t}(1)\lR_2 \leq \limN [o(1)]^2t\sum_{j=0}^{t-1} \lL\bq^j,\bq^j \lR \das 0$\normalsize.
    Therefore, using $\limN \bQ_{t}\overrightarrow{\bo}_{t}(1) \das \mathbf{0}_N$, we get 
    \begin{equation}
        \label{eq:4.4}
                \bh^{t+1}|_{\cF_{t+1,t}} \xRightarrow{d} \sum_{j=0}^{t-1}\zeta_j\bh^{j+1} +\widetilde{\bA}^*\bm^t_{\perp},
    \end{equation} which completes the proof of \eqref{eq:h1}.\\
    b) Using \eqref{b:b} for $\phi_b(b^t_i,w_i) = |g_t(b^t_i,w_i)|^{2+2\alpha}$, we get $\limn \frac{1}{n} \sumn |m^t_i|^{2+2\alpha} \das \E[|g_t(\sigma_t\widetilde{Z}_t,W)|^{2+2\alpha}] < \infty$. 
    Because $\sumn |m^t_{\perp i}|^{2+2\alpha} < \sumn |m^t_{i}|^{2+2\alpha}$, the following holds $\limsupn \sumn |m^t_{\perp i}|^{2+2\alpha} < \infty$, which concludes \eqref{eq:alphaM}.
    c) For $t_1 <t$ and $t_2 =t$, we have \small$\limN \lL \bh^{t_1+1},\bh^{t+1} \lR \ddis \limN \sum_{j=0}^{t-1}\zeta_j\lL\bh^{t_1+1},\bh^j\lR + \limN \lL\bh^{t_1+1},\widetilde{\bA}^*\bm^t_{\perp}\lR$\normalsize~due to \eqref{eq:4.4}, resulting in   
    \small\begin{equation}
        \d4\limN \lL \bh^{t_1+1},\bh^{t+1} \lR\das \!\sum_{j=0}^{t-1}\zeta_j\limn \lL\bm^{t_1},\bm^j\lR \!+\! \limN  \frac{\bm^{t^*}_{\perp}\widetilde{\bA}\bh^{t_1+1}}{N}, \label{eq:4.5b}
    \end{equation}\normalsize
    where \eqref{eq:4.5b} is by the induction hypothesis \eqref{eq:h2}. 
    Note that $\frac{\bm^{t^*}_{\perp}}{\|\bm^t_{\perp}\|_2} \widetilde{\bA} \frac{\bh^{t_1+1}}{\|\bh^{t_1+1}\|_2} \ddis \frac{Z}{\sqrt{n}}$ due to Proposition~\ref{ps7}.
    The second term in \eqref{eq:4.5b} is represented as  
    \small\begin{multline}
    \limN \frac{\bm^{t^*}_{\perp}\widetilde{\bA}\bh^{t_1+1}}{N}  \ddis\limN  \frac{\|\bm^{t}_{\perp}\|_2}{\sqrt{n}}\frac{\|\bh^{t_1+1}\|_2}{\sqrt{N}} \frac{\sqrt{n}}{\sqrt{N}}\frac{Z}{\sqrt{n}},\\
    =\sqrt{\rho}\limN \sqrt{\lL\bm^{t}_{\perp},\bm^{t}_{\perp} \lR \lL \bh^{t_1+1},\bh^{t_1+1}\lR}\frac{Z}{\sqrt{n}} \das 0,  \label{eq:4.6d}   
    \end{multline}\normalsize
    Substituting \eqref{eq:4.6d} into \eqref{eq:4.5b} yields
    \small\begin{multline}
     \d4\d4\limN \lL \bh^{t_1+1},\bh^{t+1} \lR  =\limn \lL\bm^{t_1},\sum_{j=0}^{t-1}\zeta_j\bm^j \lR = \limn \lL \bm^{t_1}, \bm^t_{||}\lR,\\
     = \limn \lL \bm^{t_1}, \bm^t_{||}\lR + \limn \lL \bm^{t_1}, \bm^t_{\perp}\lR = \limn \lL \bm^{t_1}, \bm^t\lR, \nonumber   
    \end{multline}\normalsize
    concluding \eqref{eq:h2} when $t_1 < t$ and $t_2 =t$.  
    
    For $t_1 =t_2 = t$, by \eqref{eq:4.4},
    \small\begin{multline}
        \limN \lL \bh^{t+1},\bh^{t+1} \lR  \ddis \sum_{i,j = 0}^{t-1}\zeta_i\zeta_j \limN \lL \bh^{i+1},\bh^{j+1}\lR \\ + 2\sum_{i=0}^{t-1}\zeta_i\limN \lL\bh^{i+1},\widetilde{\bA}^*\bm^t_{\perp}\lR + \limN \lL\widetilde{\bA}^*\bm^t_{\perp},\widetilde{\bA}^*\bm^t_{\perp}\lR.
    \end{multline}\normalsize~
    Then, by \eqref{eq:4.6d}, the following holds 
    \small\begin{multline}\label{eq:4.9}
        \limN \lL \bh^{t+1},\bh^{t+1}\lR  \das \sum_{i,j = 0}^{t-1}\zeta_i\zeta_j  \limN \lL \bh^{i+1},\bh^{j+1}\lR + \\ \limN \lL\widetilde{\bA}^*\bm^t_{\perp},\widetilde{\bA}^*\bm^t_{\perp}\lR. 
    \end{multline}\normalsize 
    By {Proposition}~\ref{ps6}, the empirical distribution of $\widetilde{\bA}^*\bm^t_{\perp}$ converges to         $\widehat{\widetilde{\bA}^*\bm^t_{\perp}} \xRightarrow{d} \cN(0,\limn \lL \bm^t_{\perp},\bm^t_{\perp}\lR)$.
    Hence, the second moment of $\widehat{\widetilde{\bA}^*\bm^t_{\perp}}$ converges to 
    \begin{equation}
    \label{eq:4.10}
    \limN \lL\widetilde{\bA}^*\bm^t_{\perp},\widetilde{\bA}^*\bm^t_{\perp} \lR \das \limn \lL\bm^t_{\perp},\bm^t_{\perp}\lR. 
    \end{equation} 
    Substituting \eqref{eq:4.10} into \eqref{eq:4.9} leads to $\limN \lL \bh^{t+1},\bh^{t+1} \lR \das\sum_{i,j = 0}^{t-1}\zeta_i\zeta_j \limN \lL \bh^{i+1},\bh^{j+1}\lR+\limn \lL\bm^t_{\perp},\bm^t_{\perp}\lR$, implying
     \small\begin{multline}
      \d4\d4\limN \lL \bh^{t+1},\bh^{t+1} \lR \das  \sum_{i,j = 0}^{t-1}\zeta_i\zeta_j \limn \lL \bm^{i},\bm^{j}\lR + \limn \lL\bm^t_{\perp},\bm^t_{\perp}\lR, \\
     =\limn \lL\bm^t_{||},\bm^t_{||}\lR+\limn \lL\bm^t_{\perp},\bm^t_{\perp}\lR = \limn \lL \bm^t,\bm^t\lR. \nonumber 
     \end{multline}\normalsize
    Therefore, \eqref{eq:h2} also holds for $t_1 = t_2 =t$, which completes the proof.

    d) Defining $\limn \lL \bm^t_{\perp},\bm^t_{\perp} \lR \das \Gamma^2_t$, we can write \begin{equation}
    \label{eq:4.12}
            \widehat{\widetilde{\bA}^*\bm^t_{\perp}} \xRightarrow{d} \cN(0,\Gamma^2_t).
    \end{equation}
    Using \eqref{eq:4.12} and \eqref{eq:4.4}, the following convergence holds 
    \begin{equation}
        \label{eq:4.13}
          h^{t+1}_i|_{\cF_{t+1,t}} \xRightarrow{d} \sum_{j=0}^{t-1}\zeta_jh^{j+1}_i + \Gamma_tZ,
\end{equation} 
where $Z \sim \cN(0,1)$. Similar to Step 3d), we can write, using \eqref{eq:4.13}, $\bv^t_i \xRightarrow{d} \widetilde{\bv}^t_i$, where $\bv^t_i = (h^1_i,.., h^{t\+1}_i,x_{0i})$ and $\widetilde{\bv}^t_i = (h^1_i,\dots,h^{t}_i,\sum_{j=0}^{t-1}\zeta_jh^{j+1}_i + \Gamma_tZ,x_{0i})$. 
Hence, to prove \eqref{b:h} we first claim that $\Big|\limN \frac{1}{N}\sumN \phi_h(\widetilde{\bv}^t_i) - \E\Big[\phi_h(\tau_0Z_0,\dots,\tau_tZ_t,X_0)\Big] \Big| \das 0$.
Similar to {Step}~2d), using the triangular inequality, we verify that $\limN Y^t_1 \das 0$ and $\limN Y^t_2 \das 0$, where
$Y^t_1 = \Big|\frac{1}{N}\sumN \Big(\phi_h(\widetilde{\bv}^t_i) - \widetilde{\phi}_h(\bv^{t-1}_i)]\Big)  \Big|$ and $Y^t_2 = \Big| \frac{1}{N}\sumN \widetilde{\phi}_h(\bv^{t-1}_i) - \E\Big[\phi_h(\tau_0Z_0,\dots,\tau_tZ_t,X_0)\Big] \Big|$,
and $\widetilde{\phi}_h(\bv^{t-1}_i) = \E_Z[\phi_h(\widetilde{\bv}^t_i)]$, $\forall i$.

First, showing $\limN Y^t_1 \das 0$ is of interest.
By \eqref{eq:controlled2},     
$|\phi_h(\widetilde{\bv}^t_i)| \leq d^t_{1} \exp \Big( d^t_2 \Big( \sum_{j=0}^{t-1}|h^{j+1}_i|^{\lambda} + \Big| \sum_{j=0}^{t-1}\zeta_jh^{j+1}_i +\Gamma_t Z_i \Big|^{\lambda} +  |x_{0i}|^{\lambda} \Big) \Big)$, 
where $d^t_1 > 0$, $d^t_2>0$, and $1 \leq \lambda < 2$ are constants. 
{Using the inequality $\|\bx\|^{\lambda}_1 \leq (t+1)^{\lambda-1}\|\bx\|^{\lambda}_{\lambda}$ for $\bx \in \R^{(t+1) \times 1}$, we get $|\phi_h(\widetilde{\bv}^t_i)|$ $\leq d^t_{1} \exp \Big( d^t_2 \Big( \sum_{j=0}^{t-1}(1+(t+1)^{\lambda-1}|\zeta_j|^{\lambda})|h^{j+1}_i|^{\lambda} + (t+1)^{\lambda-1}|\Gamma_t|^{\lambda} |Z_i|^{\lambda} + |x_{0i}|^{\lambda} \Big) \Big)$. 
Hence, 
$\E_{Z}\Big[|\phi_b(\widetilde{\bv}^t_i)|^{2+\kappa}\Big] \leq d^t_5\exp\Big[d^t_4\Big(\sum_{j=0}^{t-1}|h^{j+1}_i|^{\lambda} + |x_{0i}|^{\lambda}\Big)\Big]$,
where $0 < \kappa <1 $, $d^t_4 = d^t_2(2+\kappa) \max \Big\{ 1+(t+1)^{\lambda-1}|\alpha_0|^{\lambda},\dots,1+(t+1)^{\lambda-1}|\alpha_{t-1}|^{\lambda}, (t+1)^{\lambda-1}|\Gamma_t|^{\lambda} \Big \}$, and $d^t_5 = (d^t_1)^{2+\kappa}\E_Z\Big[\exp(d^t_4|Z_i|^{\lambda})\Big]$ are constants}. 
Define $Y^t_{N,i} = \phi_h(\widetilde{\bv}^t_i) - \E_{Z}[\phi_h(\widetilde{\bv}^t_i)]$, $\forall i$. 
To prove the convergence $\limn Y^t_1 \das 0$, we will show that $\{Y^t_{N,i}\}_{i=1}^{N}$ satisfy the condition in {Lemma}~\ref{lm3} in Appendix~\ref{AppLM}. 
Indeed, the $\E_Z[|Y^t_{N,i}|^{2+\kappa}]$ is upper bounded as follows. 
    \small\begin{subequations}
    \beq 
    \E_Z[|Y^t_{N,i}|^{2+\kappa}]
    \d4&\leq&\d4  2^{1+\kappa}\bigg(\E_Z[|\phi_h(\widetilde{\bv}^t_i)|^{2+\kappa}] + |\E_Z[\phi_h(\widetilde{\bv}^t_i)]|^{2+\kappa} \bigg), \nonumber\\ \d4&\leq&\d42^{2+\kappa}\E_Z[|\phi_h(\widetilde{\bv}^t_i)|^{2+\kappa}] \nonumber \\
    \d4&\leq&\d4 d^t_6\exp\bigg(d^t_4\bigg(\sum_{j=0}^{t-1}|\zeta_j h^{j+1}_i|^{\lambda} + |x_{0i}|^{\lambda}\bigg)\bigg),\nonumber\\ 
    &\triangleq& {\psi_h(\bv^{t-1}_i).}\label{eq:AppB2}
    \eeq 
    \end{subequations}\normalsize
    Then, $\psi_h(\bv^{t-1}_i)$ is a controlled function. 
    From \eqref{eq:AppB2}, we get, for $N$ is sufficiently large,   
    \begin{subequations}
    \label{eq:AppB3}
    \beq 
    \frac{1}{N}\sumn \E_{Z}[|Y^t_{N,i}|^{2+\kappa}] &\leq& \frac{1}{N}\sumn \psi_h(\bv^{t-1}_i),\nonumber\\
    &\das& \E[\psi_h(\tau_0 Z_0,\dots,\tau_{t-1}Z_{t-1},X_0)],\nonumber\\
    &<& cN^{\kappa/2},\label{eq:AppB3c}
    \eeq 
    \end{subequations} 
    where $c$ is a positive constant and \eqref{eq:AppB3c} holds because $\E[\psi_b(\sigma_0\widetilde{Z}_0,\dots,\sigma_{t-1}\widetilde{Z}_{t-1},W)] = d^t_7 < \infty$ and there exists $N_t$, a positive constant, such that $d^t_7< cN^{\kappa/2}$ for $N > N_t$. 
    Using {Lemma}~\ref{lm3} in Appendix~\ref{AppLM}, $\limN \frac{1}{N}\sumN Y^t_{N,i} \das 0$, implying  $\limN Y^t_1 \das 0$. 

    We are now ready to verify the convergence $\limN Y^t_2 \das 0$. Applying the induction hypothesis \eqref{b:h} for $\widetilde{\phi}_b(\bv^{t-1}_i)$  gives 
 
\begin{multline}
\begin{aligned}
    \d4\d4\d4 &\limN \frac{1}{N}\sumN \widetilde{\phi}_h(\bv^{t-1}_i) 
    \das \E[\widetilde{\phi}_h(\tau_0Z_{0},\dots,\tau_{t-1}Z_{t-1},X_0)], \\ 
    &=\E\bigg[\E_Z[\phi_h(\tau_0Z_{0},\dots,\tau_{t-1}Z_{t-1},\sum_{j=0}^{t-1}\zeta_j\tau_jZ_j+\Gamma_tZ,X_0)] \bigg],\\
    &=\E\bigg[\phi_h(\tau_0Z_{0},\dots,\tau_{t-1}Z_{t-1},\sum_{j=0}^{t-1}\zeta_j\tau_jZ_j+\Gamma_tZ,X_0) \bigg].\nonumber
\end{aligned}
\end{multline}
Therefore, showing $\limN Y^t_2 \das 0$ is 
 equivalent to proving $\sum_{j=0}^{t-1}\zeta_j\tau_jZ_j+\Gamma_tZ = \tau_tZ_t$, where $Z_t \sim \cN(0,1)$ and $\tau_t$ is defined in \eqref{eq:SE}.
Similar to the proof of $\limn X^t_2 \das 0$ in Step 3d), setting $\phi_h(\bv^t_i) = (h^t_i)^2$, i.e., $\phi_h(\widetilde{\bv}^t_i) = \left( \sum_{j=0}^{t-1}\zeta_jh^{j+1}_i + \Gamma_tZ \right)^2$, we get  

\begin{multline}
    \begin{aligned}
        &\limN \lL \bh^{t+1},\bh^{t+1} \lR = \limN \frac{1}{N}\sumN \phi_h(\bv^t_i) \\ 
        &\ddis \limN \frac{1}{N}\sumN \phi_h(\widetilde{\bv}^t_i) \das  \E\bigg[\bigg(\sum_{j=0}^{t-1}\zeta_j\tau_jZ_j+\Gamma_tZ\bigg)^2\bigg].\nonumber
    \end{aligned}
\end{multline}
Using \eqref{eq:h2}, we get $\limN \lL \bh^{t+1},\bh^{t+1} \lR \das \limn \lL \bm^t,\bm^t \lR = \tau^2_t$, where the last equality holds by the induction hypothesis \eqref{b:b} for $\phi_b(\bu^t_i) = g^2_t(b^t_i,w_i)$, resulting in  $\limn \lL \bm^t,\bm^t \lR \das \E[g^2_t(\sigma_t\widetilde{Z}_t,W)] = \tau^2_t$. 
Hence, $\E\Big[\Big(\sum_{j=0}^{t-1}\zeta_j\tau_jZ_j+\Gamma_tZ\Big)^2\Big] \das \tau^2_t$, which implies $\sum_{j=0}^{t-1}\zeta_j\tau_jZ_j+\Gamma_tZ = \tau_tZ_t$. Thus, it is  verified that $\limN Y^t_2 \das 0$, which completes the proof of \eqref{b:h}.   
\end{document}